\definecolor{darkred}{RGB}{179, 16, 32}
\pgfplotsset{width=10cm,compat=1.13}
\renewcommand\onecolumngrid{
\do@columngrid{one}{\@ne}%
\def\set@footnotewidth{\onecolumngrid}
\def\footnoterule{\kern-6pt\hrule width 1.5in\kern6pt}%
}
\renewcommand\twocolumngrid{
        \def\footnoterule{
        \dimen@\skip\footins\divide\dimen@\thr@@
        \kern-\dimen@\hrule width.5in\kern\dimen@}
        \do@columngrid{mlt}{\tw@}
}%
\newcommand{\ct}{^{\dagger}}
\newcommand{\tp}{^{\mathsf{T}}}
\newcommand{\x}{\otimes}
\newcommand{\xp}[1]{^{\otimes #1}}
\DeclarePairedDelimiterX\Set[1]\{\}{%
    
    #1
}
\DeclarePairedDelimiter{\set}{\lbrace}{\rbrace}
\DeclarePairedDelimiter{\abs}{\lvert}{\rvert}
\DeclarePairedDelimiter{\of}{\lparen}{\rparen}
\DeclarePairedDelimiter{\sof}{\lbrack}{\rbrack}
\renewcommand{\bra}[1]{\langle{#1}\rvert}
\renewcommand{\ket}[1]{\lvert{#1}\rangle}
\renewcommand{\braket}[2]{\langle{#1}|{#2}\rangle}
\newcommand{\ketbra}[2]{\ket{#1}\bra{#2}}
\newcommand{\defeq}{\vcentcolon=}
\newcommand{\eqdef}{=\vcentcolon}
\renewcommand{\leq}{\leqslant}
\renewcommand{\geq}{\geqslant}
\newcommand{\1}{\mathds{1}}
\newcommand{\C}{\mathbb{C}} 
\DeclareMathOperator{\End}{End} 
\DeclareMathOperator{\Tr}{Tr} 
\DeclareMathOperator{\U}{U} 
\DeclareMathOperator{\SU}{SU} 
\DeclareMathOperator{\CS}{\mathbb{C}S} 
\newcommand{\A}{\mathcal{A}} 
\newcommand{\Irr}[1]{\widehat#1} 
\newcommand{\Paths}{\mathrm{Paths}} 
\newcommand{\p}[1]{\underline{#1}} 
\newcommand{\I}{\mathcal{I}} 
\renewcommand{\O}{\mathcal{O}} 
\renewcommand{\P}{\mathcal{P}} 
\newcommand{\F}{\mathcal{F}} 
\newcommand\restr[2]{{
  \left.\kern-\nulldelimiterspace 
  #1 
  \vphantom{\big|} 
  \right|_{#2} 
  }}
\DeclareMathSymbol{\shortminus}{\mathbin}{AMSa}{"39}
\newcommand{\g}{\cellcolor{green!25}}
\newcolumntype{g}{>{\columncolor{green}}c}
\newcommand{\Supp}{\mathrm{Supp}}
\newcommand{\sfT}{\mathsf{T}}
\newcommand{\CC}{\mathbb{C}}
\newcommand{\dket}[1]{\vert {#1} \rangle\!\rangle}
\newcommand{\dbraket}[2]{\langle\!\langle {#1} \vert {#2} \rangle\!\rangle}
\newcommand{\dketbra}[2]{\vert {#1} \rangle\!\rangle\!\langle\!\langle {#2} \vert}
\def\dbraket#1{%
    \@ifnextchar\bgroup{%
        \dbraket@{#1}%
    }{%
        \langle\!\langle {#1} \vert {#1} \rangle\!\rangle%
    }%
}
\def\dbraket@#1#2{%
    \langle\!\langle {#1} \vert {#2} \rangle\!\rangle%
}
\def\dketbra#1{%
    \@ifnextchar\bgroup{%
        \dketbra@{#1}%
    }{%
        \vert {#1} \rangle\!\rangle\!\langle\!\langle {#1} \vert%
    }%
}
\def\dketbra@#1#2{%
    \vert {#1} \rangle\!\rangle\!\langle\!\langle {#2} \vert%
}
\newcommand{\dd}{\mathrm{d}}
\newcommand{\ZZ}{\mathbb{Z}}
\newcommand{\mcA}{\mathcal{A}}
\newcommand{\mcC}{\mathcal{C}}
\newcommand{\mcF}{\mathcal{F}}
\newcommand{\mcI}{\mathcal{I}}
\newcommand{\mcM}{\mathcal{M}}
\newcommand{\mcO}{\mathcal{O}}
\newcommand{\mcP}{\mathcal{P}}
\newcommand{\mcR}{\mathcal{R}}
\newcommand{\mcU}{\mathcal{U}}
\newcommand{\mcV}{\mathcal{V}}
\newcommand{\mcX}{\mathcal{X}}
\newcommand{\scB}{\mathscr{B}}
\newcommand{\cellblue}{\cellcolor{blue!25}}
\newcommand{\cellred}{\cellcolor{red!25}}
\newtheorem{definition}{Definition}
\newtheorem{theorem}[definition]{Theorem}
\newtheorem{lemma}[definition]{Lemma}
\begin{document}

\preprint{APS/123-QED}

\title{Sequential quantum processes with group symmetries}
\author{Dmitry Grinko}
\thanks{These authors contributed equally to this work.\\
Dmitry Grinko: \href{mailto:d.grinko@uva.nl}{d.grinko@uva.nl}\\
Satoshi Yoshida: \href{mailto:satoshiyoshida.phys@gmail.com}{satoshiyoshida.phys@gmail.com}}
\affiliation{QuSoft, Amsterdam, The Netherlands}
\affiliation{Institute for Logic, Language and Computation, University of Amsterdam, The Netherlands} 
\affiliation{Korteweg-de Vries Institute for Mathematics, University of
Amsterdam, The Netherlands}
\author{Satoshi Yoshida}
\thanks{These authors contributed equally to this work.\\
Dmitry Grinko: \href{mailto:d.grinko@uva.nl}{d.grinko@uva.nl}\\
Satoshi Yoshida: \href{mailto:satoshiyoshida.phys@gmail.com}{satoshiyoshida.phys@gmail.com}}
\affiliation{Department of Physics, Graduate School of Science, The University of Tokyo, Hongo 7-3-1, Bunkyo-ku, Tokyo 113-0033, Japan}
\author{Mio Murao}
\affiliation{Department of Physics, Graduate School of Science, The University of Tokyo, Hongo 7-3-1, Bunkyo-ku, Tokyo 113-0033, Japan}
\affiliation{Trans-scale Quantum Science Institute, The University of Tokyo, Bunkyo-ku, Tokyo 113-0033, Japan}
\author{Maris Ozols}
\affiliation{QuSoft, Amsterdam, The Netherlands}
\affiliation{Institute for Logic, Language and Computation, University of Amsterdam, The Netherlands} 
\affiliation{Korteweg-de Vries Institute for Mathematics, University of
Amsterdam, The Netherlands}

\date{\today}

\begin{abstract}
    Symmetry plays a crucial role in the design and analysis of quantum protocols.
    This result shows a canonical circuit decomposition of a $(G\times H)$-invariant quantum comb for compact groups $G$ and $H$ using the corresponding Clebsch--Gordan transforms, which naturally extends to the $G$-covariant quantum comb.
    By using this circuit decomposition, we propose a parametrized quantum comb with group symmetry, and derive the optimal quantum comb which transforms an unknown unitary operation $U\in \SU(d)$ into its inverse $U^\dagger$ or transpose $U\tp$.
    From numerics, we find a deterministic and exact unitary transposition protocol for $d=3$ with $7$ queries to $U$.
    This protocol improves upon the protocol shown in the previous work, which requires $13$ queries to $U$.
\end{abstract}

\maketitle

\section{Introduction}

Symmetry plays an essential role in analyzing quantum systems and quantum information protocols~\cite{noether1971invariant,itzykson1966unitary, harrow2005applications, hayashi2017group1, hayashi2017group2}.
Quantum cloning is one of the most studied protocols in this context, as it highlights the trade-offs between accuracy and resource consumption in quantum state manipulation~\cite{wootters1982single, dieks1982communication, barnum1996noncommuting, buvzek1996quantum, gisin1997optimal, bruss1998optimal, werner1998optimal, keyl1999optimal,cwiklinski2012region, studzinski2014group, AsymmetricCloningCerf, scarani2005quantum,dariano2003optimal, AsymmetricCloning, nechita2023asymmetric}.
For $m\to n$ cloners, since the quantum state $U^{\otimes m}\ket{\psi}^{\otimes m}$ should be ideally cloned to a quantum state $U^{\otimes n}\ket{\psi}^{\otimes n}$ for any $U\in \U(d)$ (which is, of course, not possible exactly for $m<n$) and input and output states are invariant under the permutation, the optimal cloning channel satisfies the unitary group covariance and the permutation invariance.
In this way, its optimal performance and corresponding protocol are determined.
Beyond quantum cloning, symmetry is used in the analysis of various scenarios such as open quantum systems~\cite{marvian2014extending,cirstoiu2020robustness}, quantum Shannon theory~\cite{werner2002counterexample,king2003capacity,holevo2005additivity,datta2006additivity,koenig2009strong,datta2016second,wilde2017converse,pirandola2017fundamental,brannan2020temperley}, quantum error correction~\cite{kong2021nearoptimal,lin2025covariant}, and quantum machine learning~\cite{glick2024covariant}.
References~\cite{grinko2024linear,grinko2023gelfand,grinko2025mixed} provide a systematic way to utilize the unitary group symmetry to analyze the semidefinite programming, which often appears as the optimization problem in quantum information processing~\cite{chiribella2016optimal, tavakoli2024semidefinite}.
The quantum Schur transforms and mixed Schur transforms~\cite{harrow2005applications, bacon2006efficient, bacon2007quantum, nguyen2023mixed, grinko2023gelfand, fei2023efficient, grinko2025mixed} provide an efficient implementation of several protocols with the unitary group symmetry (e.g., Refs.~\cite{buhrman2022quantum, cervero2023weak, cervero2024memory}).

Recently, the quantum supermaps have been developed to analyze the transformation of quantum channels, such as dynamical resource theory~\cite{chitambar2019quantum} and higher-order quantum transformation~\cite{taranto2025higher}.
The quantum supermaps are implemented by the quantum combs in the quantum circuit, which takes the input state on stream and outputs states accordingly~\cite{chiribella2008qca}.
This circuit structure also describes non-Markovian dynamics~\cite{pollock2018non}, and it is also related to the causal structure~\cite{oreshkov2012quantum}.
The quantum supermaps can be represented as a Choi matrix, and their optimization is often formulated as semidefinite programming~\cite{chiribella2016optimal}.
Thus, unitary group symmetry is helpful in analyzing its optimal performance~\cite{grinko2024linear}.
However, its efficient description in a quantum circuit is not known.

This work shows a canonical circuit decomposition of the $\U(d)\times \U(d)$-invariant quantum comb.
This utilizes the (dual) Clebsch--Gordan transforms~\cite{harrow2005applications, bacon2006efficient,bacon2007quantum, nguyen2023mixed, grinko2023gelfand, fei2023efficient} combined with an arbitrary isometry operation acting on the multiplicity space.
This construction is extended to ($G\times H$)-invariant quantum comb for compact groups $G$ and $H$ by using the generalized Clebsch--Gordan transforms, which is further extended to the $G$-covariant quantum comb.
By combining this circuit structure with the parametrized quantum comb proposed in Ref.~\cite{mo2025parameterized}, we propose a parameterized quantum comb with group symmetry, which significantly reduces the number of variables to be optimized.
This parameterized quantum comb is used to derive the optimal quantum comb for unitary inversion and transposition, which are the tasks to transform $n$ copies of an unknown unitary operation $U\in \SU(d)$ into $U\ct$ and $U\tp$, respectively~\cite{chiribella2016optimal, sardharwalla2016universal, quintino2019probabilistic, quintino2019reversing, dong2021success, quintino2022deterministic, yoshida2023reversing, navascues2018resetting, trillo2020translating, trillo2023universal, chen2024quantum, mo2025parameterized}.
We numerically show that qutrit-unitary transposition is possible with $7$ queries of the input unitary $U$.
This is improved over Ref.~\cite{chen2024quantum} showing qutrit-unitary transposition with $13$ queries.

The rest of this paper is organized as follows.
Section \ref{sec:preliminaries} introduces the necessary background on quantum combs and group theory.
Section \ref{sec:main_results} presents the main results of this work, which are the canonical circuit decomposition of the $(G\times H)$-invariant quantum comb and its extension to the $G$-covariant quantum comb.
Section \ref{sec:application} presents the application of the main results, which provides an efficient parametrization of the quantum comb with group symmetry and its application to unitary inversion and transposition.
Appendix~\ref{sec:proof_main_results} provides the proof of the main results shown in Sec.~\ref{sec:main_results}.
Section \ref{sec:conclusion} concludes the paper.

\begin{figure}
    \centering
    \includegraphics[width=\linewidth]{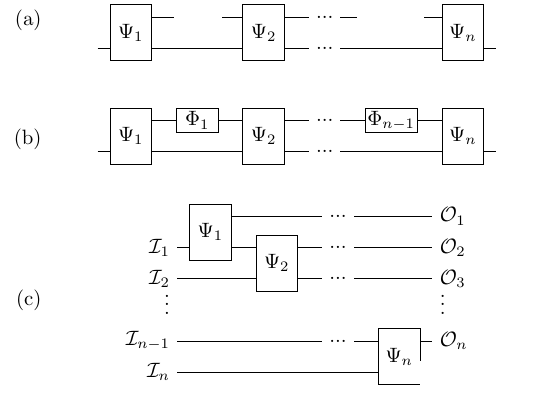}
    \caption{(a) Quantum comb with $n-1$ open slots, where $\Psi_1, \ldots, \Psi_{n}$ are quantum channels.\\
    (b) Quantum comb transforms quantum channels $\Phi_1, \ldots, \Phi_{n-1}$ into a quantum channel $\Phi_\mathrm{out}$ given in Eq.~\eqref{eq:comb_output}.\\
    (c) Quantum comb can be considered as a quantum channel $\Psi: \bigotimes_{i=1}^{n} \End(\mcI_i) \to \bigotimes_{i=1}^{n} \End(\mcO_i)$ with non-signaling conditions $\mcI_i \not\to \mcO_j$ for $j<i$.}
    \label{fig:quantum_comb}
\end{figure}

\section{Preliminaries}
\label{sec:preliminaries}
In this section, we introduce the necessary background on quantum combs and group theory.
Section~\ref{sec:quantum_comb} introduces the definition of quantum combs and their Choi matrix representation.
Section~\ref{sec:commutant_tensor_representation} introduces the commutant of the tensor representation of compact groups, which is used to analyze the quantum combs with group symmetries.
\subsection{Quantum comb}
\label{sec:quantum_comb}
A quantum comb is a quantum circuit composed of $n$ quantum channels $\Psi_i: \End(\mcI_i \otimes \mcA_{i-1}) \to \End(\mcO_i \otimes \mcA_i)$ for $i\in [n]$ with open slots in between~\cite{chiribella2008qca}, where $\mcI_i$, $\mcO_i$, and $\mcA_i$ are the input, output, and auxiliary systems of the channel $\Psi_i$, respectively, $\mcA_0\simeq \mcA_{n} \simeq \CC$ are trivial systems, $\End(\mcX)$ represents the set of linear operators on a Hilbert space $\mcX$, and $[n] \defeq \{1, \ldots, n\}$.
It defines a quantum supermap that transforms $n-1$ quantum channels $\Phi_i: \End(\mcO_i) \to \End(\mcI_{i+1})$ for $i\in [n-1]$ into a quantum channel $\Phi_\mathrm{out}: \End(\mcI_1) \to \End(\mcO_n)$ given by
\begin{align}
    \label{eq:comb_output}
    \Phi_\mathrm{out}
    &= \mathcal{C}(\Phi_1, \ldots, \Phi_{n-1}) \\
    & \defeq \Psi_{n} \circ (\Phi_{n-1} \otimes \1_{\mcA_{n-1}}) \circ \cdots \circ (\Phi_1 \otimes \1_{\mcA_1}) \circ \Psi_1,
\end{align}
where $\1_{\mcA_i}$ is the identity channel on the auxiliary system $\mcA_i$.
It can also be seen as a quantum channel $\Psi: \bigotimes_{i=1}^{n} \End(\mcI_i) \to \bigotimes_{i=1}^{n} \End(\mcO_i)$ with non-signaling conditions $\mcI_i \not\to \mcO_j$ for $j<i$~\cite{eggeling2002semicausal}.
Figure~\ref{fig:quantum_comb} illustrates these three equivalent representations of a quantum comb.
The quantum comb can be represented in a Choi matrix given by
\begin{align}
    C \defeq \sum_{\vec{i}, \vec{j}} \ketbra{\vec{i}}{\vec{j}}_{\mcI^n} \otimes \Psi(\ketbra{\vec{i}}{\vec{j}})_{\mcO^n} \in \End(\mcI^n\otimes \mcO^n),
\end{align}
where $\mcI^n$ and $\mcO^n$ are joint Hilbert spaces given by $\mcI^n \defeq \bigotimes_{i=1}^n \mcI_i$ and $\mcO^n \defeq \bigotimes_{i=1}^n \mcO_i$, respectively, and $\{\ket{\vec{i}}\}$ is an orthonormal basis of $\mcI^n$.
The Choi matrix $C$ satisfies the following constraints~\cite{chiribella2008qca}:
\begin{align}
    \label{eq:comb_condition}
    \begin{split}
    C &\succeq 0,\\
    \Tr_{\mcO_i} C_i &= C_{i-1} \otimes \1_{\mcI_{i}} \quad \forall i \in [n],\\
    C_0 &= 1,
    \end{split}
\end{align}
where $C_{n} \defeq C$ and $C_{i-1} \defeq \frac{1}{d}\Tr_{\mcI_i \mcO_{i}} C_i$.
Conversely, any matrix $C$ satisfying the comb condition~\eqref{eq:comb_condition} is the Choi matrix of a quantum comb~\cite{chiribella2008qca}.

\subsection{Commutant of the tensor representation of compact groups}
\label{sec:commutant_tensor_representation}

We consider unitary representations $\rho_i: G\to \End(V_{\rho_i})$ of a compact group $G$ for $i\in [N]$, where $V_{\rho_i}$ is a representation space and $\End(V_{\rho_i})$ represents the group of invertible operators on $V_{\rho_i}$.
In this section, we investigate the commutant of the tensor representation $\bigotimes_{i=1}^N \rho_i$ defined by:
\begin{align}
    &\mathrm{Comm}\left(\otimes_{i=1}^n \rho_i\right)\nonumber\\
    &\defeq \left\{X\in \End\left(\otimes_{i=1}^n V_{\rho_i}\right) \; \middle| \; \left[X, \otimes_{i=1}^n \rho_i(g)\right] = 0 \quad \forall g\in G\right\}.
    \label{eq:commutant_definition}
\end{align}
Due to the Peter--Weyl theorem, the tensor representation $\bigotimes_{i=1}^n \rho_i$ is decomposed into irreducible representations (irreps) as
\begin{align}
    \bigotimes_{i=1}^n V_{\rho_i} &\cong \bigoplus_{\lambda\in \Irr{G}^{(n)}} V_\lambda \otimes \CC^{M_\lambda^{(n)}},\\
    \bigotimes_{i=1}^n \rho_i &\cong \bigoplus_{\lambda\in \Irr{G}^{(n)}} \lambda(g) \otimes \1_{M_\lambda^{(n)}} \quad \forall g\in G,
\end{align}
where $\Irr{G}^{(n)}$ is the set of irreps of $G$ appearing in the decomposition of $\bigotimes_{i=1}^n \rho_i$, $V_\lambda$ is the representation space of $\lambda\in \Irr{G}^{(n)}$, and $m_\lambda^{(i)}$ is the multiplicity of $\lambda$ in $\rho_i$.
Therefore, due to Schur's lemma, the commutant is spanned by operators given by
\begin{align}
\label{eq:def_E}
    E^\lambda_{pq} &\defeq \left(U_\mathrm{Sch}^{\rho_1, \ldots, \rho_n}\right)^\dagger \1_{V_\lambda} \otimes \ketbra{p}{q}_{\CC^{M_\lambda^{(n)}}} \left(U_\mathrm{Sch}^{\rho_1, \ldots, \rho_n}\right),
\end{align}
where $\{\ket{p}\}$ is an orthonormal basis of $\CC^{M_\lambda^{(n)}}$.

The multiplicity space $\CC^{M_\lambda^{(n)}}$ is recursively determined as follows.
Due to the Peter--Weyl theorem, the representation $\rho_i$ is decomposed into irreps as
\begin{align}
    V_{\rho_i} &\cong \bigoplus_{\nu\in \Irr{G}_i} V_\nu \otimes \CC^{m_\nu^{(i)}},\\
    \rho_i &\cong \bigoplus_{\nu\in \Irr{G}_i} \nu(g) \otimes \1_{m_{\rho_i}^{\nu}} \quad \forall g\in G,
\end{align}
where $\Irr{G}_i$ is the set of irreps of $G$ appearing in the decomposition of $\rho_i$ and $m_{\rho_i}^{\nu}$ is the multiplicity of the irrep $\nu$ in $\rho_i$.
Therefore, for $n=1$, we have $M_\lambda^{(1)} = m_{\rho_1}^\lambda$.
For $n\geq 2$, we have
\begin{align}
    &\bigoplus_{\lambda_n \in \Irr{G}^{(n)}} V_{\lambda_n} \otimes \CC^{M_{\lambda_n}^{(n)}} \cong \bigotimes_{i=1}^n V_{\rho_i}\\
    &\cong \left(\bigotimes_{i=1}^{n-1} V_{\rho_i}\right) \otimes V_{\rho_n}\\
    &\cong \bigoplus_{\lambda_{n-1} \in \Irr{G}^{(n-1)}} V_{\lambda_{n-1}} \otimes \CC^{M_{\lambda_{n-1}}^{(n-1)}} \otimes \bigoplus_{\nu \in \Irr{G}_n} V_\nu \otimes \CC^{m_{\rho_n}^{\nu}}.
\end{align}
Due to the Peter--Weyl theorem, the tensor product representation $V_{\lambda_{n-1}} \otimes V_\nu$ is decomposed into irreps as
\begin{align}
    V_{\lambda_{n-1}} \otimes V_\nu \cong \bigoplus_{\lambda_n\in \Irr{G}^{(n)}} V_{\lambda_n} \otimes \CC^{c_{\lambda_{n-1}\nu}^\lambda},
\end{align}
where $c_{\lambda_{n-1}\nu}^{\lambda_{n}} \in \ZZ_{\geq 0}$ is the multiplicity of $\lambda$ in $V_{\lambda_{n-1}} \otimes V_\nu$ called the Littlewood--Richardson coefficient.
By using the Schur orthogonality relation of the characters of irreps, the Littlewood--Richardson coefficient for irreps $\mu, \nu, \lambda\in \Irr{G}$ is given by
\begin{align}
    \label{eq:littlewood_richardson}
    c_{\mu\nu}^\lambda = \int_G \chi_\mu(g) \chi_\nu(g) \overline{\chi_\lambda(g)} \dd g,
\end{align}
where $\chi_\mu$ is the character of the representation $\mu$ given by $\chi_\mu(g) = \Tr[\rho_\mu(g)]$ and $\dd g$ is the Haar measure of $G$.
For later convenience, we also introduce the irreducible decomposition of $V_{\lambda_{n-1}} \otimes V_{\rho_n}$ given by
\begin{align}
    &V_{\lambda_{n-1}} \otimes V_{\rho_n}
    \cong V_{\lambda_{n-1}} \otimes \bigoplus_{\nu \in \Irr{G}_n} V_\nu \otimes \CC^{m_{\rho_n}^{\nu}}\\
    &\cong \bigoplus_{\lambda_n\in \Irr{G}^{(n)}} V_{\lambda_{n}} \otimes \left[\bigoplus_{\nu\in \Irr{G}_n} \CC^{m_{\rho_n}^{\nu}} \otimes \CC^{c_{\lambda_{n-1}\nu}^{\lambda_n}}\right]\\
    &= \bigoplus_{\lambda_n\in \Irr{G}^{(n)}} V_{\lambda_{n}} \otimes \CC^{c_{\lambda_{n-1} \rho_n}^{\lambda_n}},
    \label{eq:decomposition_tensor_representation}
\end{align}
where $c_{\lambda_{n-1} \rho_n}^{\lambda_n}$ is defined by
\begin{align}
    c_{\lambda_{n-1} \rho_n}^{\lambda_n} \defeq \sum_{\nu\in \Irr{G}_n} m_{\rho_n}^{\nu} c_{\mu\nu}^\lambda.
\end{align}
We also introduce the notation
\begin{align}
    \Irr{G}^{(0)} = \{\varnothing\},\quad V_\varnothing \defeq \CC, \quad c_{\varnothing \rho_1}^\lambda \defeq m_{\rho_1}^{\lambda}
\end{align}
such that Eq.~\eqref{eq:decomposition_tensor_representation} holds for $n=1$.
Therefore, we have
\begin{align}
    \CC^{M_\lambda^{(n)}}
    &\cong \bigoplus_{\mu\in \Irr{G}^{(n-1)}} \CC^{M_\mu^{(n-1)}} \otimes \CC^{c_{\lambda_{n-1} \rho_n}^{\lambda_n}}.
    \label{eq:decomp:multiplicity}
\end{align}
Applying this decomposition recursively for $n$, we obtain
\begin{align}
    \CC^{M_\lambda^{(n)}}
    &\cong \bigoplus_{\substack{\lambda_1\in \Irr{G}_1, \lambda_2\in \Irr{G}^{(2)},\\ \ldots, \lambda_{n-1} \in \Irr{G}^{(n-1)}}} \CC^{c_{\varnothing \rho_1}^{\lambda_1}} \otimes \CC^{c_{\lambda_1 \rho_2}^{\lambda_2}}\otimes \cdots \CC^{c_{\lambda_{n-1} \rho_n}^{\lambda}}.
\end{align}

\begin{figure}
    \centering
    \includegraphics[width=\linewidth]{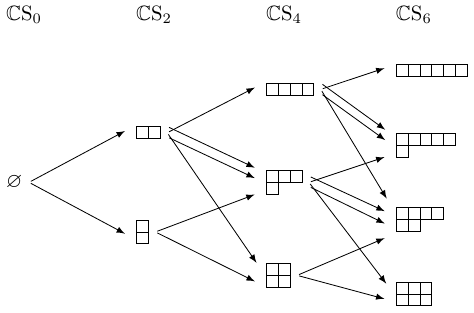}
    \caption{Bratteli diagram for the commutant algebra chain $\CS_0 \hookrightarrow \CS_2 \hookrightarrow \CS_4 \hookrightarrow \CS_6$, with each commutant defined as in Eq.~\eqref{eq:commutant_definition}, associated with the tensor representations $\rho_1(g) = \rho_2(g) = \rho_3(g) = g^{\otimes 2}$ for $g\in G = \U(2)$.
    Vertices at level $i$ are the irreps appearing in the commutant of $g^{\otimes 2i}$, and parallel edges indicate multiplicities greater than one.}
    \label{fig:bratelli_diagram}
\end{figure}

To describe this decomposition, we introduce a Bratteli diagram $\scB$, which is a graph composed of vertices labelled by $\ZZ_{\geq 0}$ (called ``levels''), as follows.
The set of vertices at level $0$, denoted by $V_0$, is given by $\{\varnothing\}$.
The set of vertices at level $i$ ($1\leq i\leq n$), denoted by $V_i$, is given by
\begin{align}
    V_i\defeq \{\lambda_i \mid \exists \lambda_{i-1}\in V_{i-1}, c_{\lambda_{i-1} \rho_i}^{\lambda_i}>0\}.
\end{align}
The set of edges from level $i-1$ to level $i$, denoted by $E_i$, is given by
\begin{align}
    E_i\defeq \{ \lambda_{i-1} \xrightarrow{a_i} \lambda_i \mid \lambda_{i-1} \in V_{i-1}, \lambda_i \in V_i, a_i\in [c_{\lambda_{i-1} \rho_i}^{\lambda_i}] \}.
\end{align}
As an explicit example of this definition, Fig.~\ref{fig:bratelli_diagram} shows the Bratteli diagram for the commutant algebra chain $\CS_0 \hookrightarrow \CS_2 \hookrightarrow \CS_4 \hookrightarrow \CS_6$ associated with $\rho_1(g) = \rho_2(g) = \rho_3(g) = g^{\otimes 2}$ for $g\in G = \U(2)$.
Then, the multiplicity space $\CC^{M_\lambda^{(n)}}$ is spanned by an orthonormal basis labeled by the set of paths from $\varnothing\in V_0$ to $\lambda\in V_n$, denoted by $\Paths(\lambda, \scB)$:
\begin{align}
    &\Paths(\lambda, \scB)\nonumber\\
    &\defeq \left \{\begin{aligned}
    &\varnothing \xrightarrow{a_1} \lambda_1 \xrightarrow{a_2}\cdots\\
    &\xrightarrow{a_{n-1}} \lambda_{n-1} \xrightarrow{a_{n}} \lambda
    \end{aligned} \;\middle|\;
    \begin{aligned}
    &\lambda_1\in V_1, \ldots, \lambda_{n-1}\in V_{n-1}, \\
    &a_1\in [c_{\varnothing \rho_1}^{\lambda_1}], \ldots, a_n\in [c_{\lambda_{n-1} \rho_n}^{\lambda}]
    \end{aligned} \right\}.
\end{align}
The corresponding basis vector is recursively given by
\begin{align}
    \label{eq:path_vector}
    \begin{split}
    &\ket{\varnothing \xrightarrow{a_1} \lambda_1}\defeq \ket{a_1},\\
    &\ket{\varnothing \xrightarrow{a_1} \lambda_1 \xrightarrow{a_2} \cdots \xrightarrow{a_{n-1}} \lambda_{n-1} \xrightarrow{a_{n}} \lambda_{n}}\\
    &\defeq \ket{\varnothing \xrightarrow{a_1} \lambda_1 \xrightarrow{a_2}  \cdots \xrightarrow{a_{n-1}} \lambda_{n-1}} \otimes \ket{a_n},
    \end{split}
\end{align}
where $\{\ket{a_1}\}_{a_1\in [c_{\varnothing\rho_1}^{\lambda_1}]}$ is an orthonormal basis of $\CC^{c_{\varnothing\rho_1}^{\lambda_1}}$, $\ket{\varnothing \xrightarrow{a_1} \lambda_1 \xrightarrow{a_2} \cdots \xrightarrow{a_{n-1}} \lambda_{n-1}}$ is a basis vector of $\CC^{M_{\lambda_{n-1}}^{(n-1)}}$, and $\{\ket{a_n}\}_{a_n\in [c_{\lambda_{n-1} \rho_n}^{\lambda}]}$ is an orthonormal basis of $\CC^{c_{\lambda_{n-1} \rho_n}^{\lambda}}$ in the decomposition~\eqref{eq:decomp:multiplicity}.
Using this orthonormal basis of the multiplicity space $\CC^{M_\lambda^{(n)}}$, we define an orthogonal basis $\{E^{\lambda}_{pq} \mid \lambda \in \Irr{G}, p,q\in \Paths(\lambda, \scB)\}$ of $\mathrm{Comm}\left(\bigotimes_{i=1}^n \rho_i\right)$ by Eq.~\eqref{eq:def_E}, i.e.,
\begin{align}
    &\mathrm{Comm}\left(\otimes_{i=1}^n \rho_i\right) \nonumber\\
    &= \mathrm{span}\left\{E^\lambda_{pq} \;\middle | \; \lambda \in \Irr{G}, p,q\in \Paths(\lambda, \scB) \right\}.
\end{align}
Then, the orthogonal basis $\{E^{\lambda}_{pq} \mid \lambda \in \Irr{G}^{(n)}, p,q\in \Paths(\lambda, \scB)\}$ satisfies the following properties:
\begin{lemma}
    \label{lem:orthogonal_basis_properties}
    The orthogonal basis $\{E^{\lambda}_{pq} \mid \lambda \in \Irr{G}^{(n)}, p,q\in \Paths(\lambda, \scB)\}$ satisfies
    \begin{align}
        E^\lambda_{pq} E^{\lambda'}_{p'q'} &= \delta_{\lambda\lambda'} \delta_{q p'} E^\lambda_{p q'},\\
        (E^\lambda_{pq})^\dagger &= E^\lambda_{qp},\\
        \Tr E^\lambda_{pq} &= \delta_{pq} d_\lambda,
    \end{align}
    where $\delta_{\lambda\lambda'}$ is 1 if $\lambda=\lambda'$ and 0 otherwise, and $d_\lambda\coloneqq \dim V_\lambda$.
\end{lemma}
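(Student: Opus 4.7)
The plan is to verify all three properties by direct computation, exploiting the fact that the $E^{\lambda}_{pq}$ are defined as images of standard matrix units $\1_{V_\lambda} \otimes \ketbra{p}{q}$ under conjugation by the unitary Schur transform $U_\mathrm{Sch}^{\rho_1,\ldots,\rho_n}$. Since unitary conjugation is an algebra isomorphism that preserves adjoints and traces, each property for the $E^\lambda_{pq}$ reduces to the analogous, essentially trivial, property for standard matrix units acting on the direct-sum decomposition $\bigoplus_\mu V_\mu \otimes \CC^{M_\mu^{(n)}}$.

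For property~(1), I would insert $U_\mathrm{Sch} U_\mathrm{Sch}^\dagger = \1$ between the two factors in the product $E^\lambda_{pq} E^{\lambda'}_{p'q'}$ and reduce to computing $(\1_{V_\lambda}\otimes \ketbra{p}{q})(\1_{V_{\lambda'}}\otimes \ketbra{p'}{q'})$ acting on $\bigoplus_\mu V_\mu \otimes \CC^{M_\mu^{(n)}}$. Summands for $\lambda \neq \lambda'$ are supported on orthogonal subspaces, so the product vanishes; when $\lambda = \lambda'$, it equals $\1_{V_\lambda} \otimes \ketbra{p}{q}\ketbra{p'}{q'} = \delta_{qp'}\, \1_{V_\lambda} \otimes \ketbra{p}{q'}$, using the orthonormality of the path basis $\{\ket{p}\}_{p\in \Paths(\lambda, \scB)}$ of $\CC^{M_\lambda^{(n)}}$. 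Conjugating back by $U_\mathrm{Sch}$ produces exactly $\delta_{\lambda\lambda'}\delta_{qp'} E^\lambda_{pq'}$.

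For property~(2), the identity $(U_\mathrm{Sch}^\dagger X U_\mathrm{Sch})^\dagger = U_\mathrm{Sch}^\dagger X^\dagger U_\mathrm{Sch}$ together with $(\1_{V_\lambda} \otimes \ketbra{p}{q})^\dagger = \1_{V_\lambda} \otimes \ketbra{q}{p}$ immediately yields $(E^\lambda_{pq})^\dagger = E^\lambda_{qp}$. For property~(3), cyclicity of the trace under unitary conjugation, combined with the tensor product structure, gives
\begin{align}
    \Tr E^\lambda_{pq} = \Tr\of[\big]{\1_{V_\lambda} \otimes \ketbra{p}{q}} = (\Tr \1_{V_\lambda})\, (\Tr \ketbra{p}{q}) = d_\lambda\, \delta_{pq}.
\end{align}

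I do not anticipate any significant obstacle: the lemma is a bookkeeping statement recording the standard matrix unit algebra of $\End(V_\lambda) \otimes \End(\CC^{M_\lambda^{(n)}})$ transported through a unitary change of basis. The only nontrivial input is the orthonormality of the path basis indexed by $\Paths(\lambda,\scB)$, which follows by induction on $n$ from the recursive definition in Eq.~\eqref{eq:path_vector}: at each step we take a tensor product with an orthonormal basis $\{\ket{a_i}\}$ of the Clebsch--Gordan multiplicity space $\CC^{c_{\lambda_{i-1}\rho_i}^{\lambda_i}}$, so orthonormality is preserved in the induction.
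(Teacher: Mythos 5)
Your proof is correct and follows essentially the same route as the paper: the paper likewise treats the first two properties as immediate consequences of the definition $E^\lambda_{pq} = U_\mathrm{Sch}^\dagger(\1_{V_\lambda}\otimes\ketbra{p}{q})U_\mathrm{Sch}$ and verifies the trace identity by the same one-line computation $\Tr(\1_{V_\lambda}\otimes\ketbra{p}{q}) = d_\lambda\,\delta_{pq}$. You merely spell out the matrix-unit algebra and the orthonormality of the path basis in more detail than the paper does.
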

\begin{proof}
    The first two properties follow from the definition~\eqref{eq:def_E}.
    The last property is shown as follows:
    \begin{align}
        \Tr E^\lambda_{pq} &= \Tr(\1_{V_\lambda} \otimes \ketbra{p}{q})\\
        &= \Tr(\1_{V_\lambda}) \Tr(\ketbra{p}{q})\\
        &= \delta_{pq} d_\lambda.
    \end{align}
\end{proof}
\begin{lemma}
    \label{lem:tensor_and_partial_trace}
    The orthogonal bases $\{E^{\lambda}_{pq} \mid \lambda \in \Irr{G}^{(n)}, p,q\in \Paths(\lambda, \scB)\}$ and $\{E^{\mu}_{rs} \mid \mu \in \Irr{G}^{(n-1)}, r,s\in \Paths(\mu, \scB)\}$ satisfy
    \begin{align}
    \label{eq:E_tensor_I}
        &E^\mu_{rs} \otimes \1_{d_n} \nonumber\\
        &= \sum_{\lambda\in \Irr{G}^{(n)}} \sum_{p, q\in \Paths(\lambda, \scB)} \sum_{a\in [c_{\mu\rho_n}^{\lambda}]} \delta_{p, (r\xrightarrow{a}\lambda)} \delta_{q, (s\xrightarrow{a}\lambda)} E^\lambda_{pq},\\
    \label{eq:E_partial_trace}
        &\Tr_n E^\lambda_{pq} \nonumber\\
        &= {d_\lambda \over d_\mu}\sum_{\mu\in \Irr{G}^{(n-1)}} \sum_{r, s\in \Paths(\mu, \scB)} \sum_{a\in [c_{\mu\rho_n}^{\lambda}]} \delta_{p, (r\xrightarrow{a}\lambda)} \delta_{q, (s\xrightarrow{a}\lambda)} E^\mu_{rs},
    \end{align}
    where $r\xrightarrow{a}\lambda$ represents the path obtained by adding the edge $a$ from $\mu$ to $\lambda$ to the path $r$.
\end{lemma}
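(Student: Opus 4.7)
The plan is to prove the two identities in sequence, using the first to derive the second by an orthogonality argument. Throughout, I work in the Schur picture where, by \eqref{eq:def_E}, each $E^\lambda_{pq}$ corresponds to the operator $\1_{V_\lambda} \otimes \ketbra{p}{q}$ supported on the $\lambda$-block of $\bigotimes_{i=1}^{n} V_{\rho_i}$, with the path-basis of $\CC^{M_\lambda^{(n)}}$ built recursively from \eqref{eq:decomp:multiplicity} via \eqref{eq:path_vector}.

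First I would establish \eqref{eq:E_tensor_I}. Apply the Schur transform for the first $n-1$ factors, so that $E^\mu_{rs}$ on $\bigotimes_{i=1}^{n-1} V_{\rho_i}$ is $\1_{V_\mu} \otimes \ketbra{r}{s}$ on $V_\mu \otimes \CC^{M_\mu^{(n-1)}}$. Tensoring with $\1_{V_{\rho_n}}$ gives $\1_{V_\mu} \otimes \ketbra{r}{s} \otimes \1_{V_{\rho_n}}$ on $V_\mu \otimes \CC^{M_\mu^{(n-1)}} \otimes V_{\rho_n}$. Now use the CG isomorphism $V_\mu \otimes V_{\rho_n} \cong \bigoplus_{\lambda} V_\lambda \otimes \CC^{c_{\mu\rho_n}^\lambda}$, which sends $\1_{V_\mu} \otimes \1_{V_{\rho_n}}$ to $\bigoplus_\lambda \1_{V_\lambda} \otimes \1_{c_{\mu\rho_n}^\lambda}$. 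Rearranging tensor factors and using the recursive path-basis \eqref{eq:path_vector} in the form $\ket{r \xrightarrow{a} \lambda} = \ket{r} \otimes \ket{a}$, the operator $\ketbra{r}{s} \otimes \1_{c_{\mu\rho_n}^\lambda}$ on $\CC^{M_\mu^{(n-1)}} \otimes \CC^{c_{\mu\rho_n}^\lambda} \subseteq \CC^{M_\lambda^{(n)}}$ is precisely $\sum_{a \in [c_{\mu\rho_n}^\lambda]} \ketbra{r\xrightarrow{a}\lambda}{s\xrightarrow{a}\lambda}$. Reading off the Schur form gives \eqref{eq:E_tensor_I}.

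Next I would derive \eqref{eq:E_partial_trace} by duality. Since $E^\lambda_{pq}$ commutes with $\bigotimes_{i=1}^{n} \rho_i(g)$ and conjugation by $\1 \otimes \rho_n(g)$ is trace-preserving in the $n$-th factor, $\Tr_n E^\lambda_{pq}$ lies in $\mathrm{Comm}(\bigotimes_{i=1}^{n-1} \rho_i)$ and therefore expands as $\sum_{\mu, r, s} c^{\lambda,\mu}_{pq,rs}\, E^\mu_{rs}$. Using Lemma~\ref{lem:orthogonal_basis_properties}, the coefficients are recovered via the Hilbert--Schmidt inner product as $c^{\lambda,\mu}_{pq,rs} = d_\mu^{-1}\, \Tr\!\bigl[(E^\mu_{sr}\otimes \1_{d_n})\, E^\lambda_{pq}\bigr]$. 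Substituting \eqref{eq:E_tensor_I} for $E^\mu_{sr}\otimes \1_{d_n}$ and applying the multiplication rule $E^{\lambda'}_{xy}E^\lambda_{pq} = \delta_{\lambda\lambda'}\delta_{yp} E^\lambda_{xq}$ together with $\Tr E^\lambda_{xq} = \delta_{xq} d_\lambda$ collapses the sum to
\begin{align}
c^{\lambda,\mu}_{pq,rs} = \frac{d_\lambda}{d_\mu}\sum_{a\in [c_{\mu\rho_n}^\lambda]} \delta_{p,(r\xrightarrow{a}\lambda)}\,\delta_{q,(s\xrightarrow{a}\lambda)},
\end{align}
which is exactly \eqref{eq:E_partial_trace}.

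The only subtle point, and the main thing to execute carefully, is bookkeeping of the isomorphism \eqref{eq:decomp:multiplicity}: the path basis at level $n$ decomposes into blocks indexed by $\mu \in \Irr{G}^{(n-1)}$, and both identities are statements about how matrix units on one block relate to matrix units on the finer (or coarser) block structure. Once the convention $\ket{r\xrightarrow{a}\lambda} = \ket{r}\otimes \ket{a}$ is fixed, both identities follow from the Schur representation of $E^\lambda_{pq}$ and the CG decomposition; everything else is mechanical. Note also that \eqref{eq:E_partial_trace} can alternatively be proved directly by the same Schur/CG reduction used for \eqref{eq:E_tensor_I}, tracing the $V_{\rho_n}$ factor and using $\Tr_{V_{\rho_n}} \1_{V_{\rho_n}} = d_{\rho_n}$ together with the dimension identity $d_\lambda\, c_{\mu\rho_n}^{\lambda}/d_\mu$ arising from the branching; the trace-duality route above is, however, shorter and avoids re-deriving CG coefficients explicitly.
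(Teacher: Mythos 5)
Your proof of \eqref{eq:E_tensor_I} is essentially the paper's: pass to the Schur picture on the first $n-1$ factors, decompose $V_\mu \otimes V_{\rho_n}$ via Clebsch--Gordan, and read off the matrix units in the recursive path basis $\ket{r\xrightarrow{a}\lambda} = \ket{r}\otimes\ket{a}$. For \eqref{eq:E_partial_trace}, however, you take a genuinely different and considerably shorter route. The paper argues via Schur's lemma that $\Tr_n E^\lambda_{pq}$ vanishes unless the two paths $p,q$ pass through the same irrep $\mu'$ at level $n-1$, and then pins down the coefficients through a three-case analysis ((ii-a)--(ii-c)) using positivity of partial traces, proportionality arguments, and a discrete-Fourier trick with roots of unity. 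You instead observe that $\Tr_n E^\lambda_{pq}$ lies in $\mathrm{Comm}(\otimes_{i=1}^{n-1}\rho_i)$, expand it in the orthogonal basis $\{E^\mu_{rs}\}$, and extract the coefficients by trace duality, $c^{\lambda,\mu}_{pq,rs} = d_\mu^{-1}\Tr\bigl[(E^\mu_{sr}\otimes\1_{d_n})E^\lambda_{pq}\bigr]$, into which \eqref{eq:E_tensor_I} and the multiplication/trace rules of Lemma~\ref{lem:orthogonal_basis_properties} can be substituted directly; I checked the bookkeeping and the resulting coefficient $\tfrac{d_\lambda}{d_\mu}\sum_a \delta_{p,(r\xrightarrow{a}\lambda)}\delta_{q,(s\xrightarrow{a}\lambda)}$ comes out correctly. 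Your derivation makes the second identity a formal adjoint of the first with respect to the Hilbert--Schmidt pairing, which is both cleaner and more robust than the paper's positivity-based case analysis; the paper's approach, on the other hand, is self-contained in the sense that it derives \eqref{eq:E_partial_trace} without presupposing that the $E^\mu_{rs}$ span the full commutant at level $n-1$ with known Gram matrix, though that fact is established earlier in the paper anyway. No gap.
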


We show the proof of Lem.~\ref{lem:tensor_and_partial_trace} in Supplemental Material~\ref{subsec:proof_lem_tensor_and_partial_trace}.

\section{Main results}
\label{sec:main_results}
This section presents the main results of this work.
As a warmup, we first consider the implementation of a $\U(d)$-invariant quantum channel using the quantum Schur transform in Sec.~\ref{subsec:warmup}.
Then, we present the canonical circuit decomposition of the $(G\times H)$-invariant quantum comb for compact groups $G$ and $H$ in Sec.~\ref{subsec:canonical_decomposition}.
Finally, we present the canonical circuit decomposition of the $G$-covariant quantum comb for a compact group $G$ in Sec.~\ref{subsec:covariant_comb}.

\subsection{Warmup: Implementation of $\U(d)$-invariant quantum channel using quantum Schur transform}
\label{subsec:warmup}
We consider a quantum channel $\Psi: \bigotimes_{i=1}^{n} \End(\mcI_i) \to \bigotimes_{i=1}^{n} \End(\mcO_i)$ that is invariant under the action of the unitary group $\U(d)$, i.e.,
\begin{align}
    \Psi(U^{\otimes n} \cdot U^{\otimes n \dagger}) = V^{\otimes n} \Psi(\cdot) V^{\otimes n \dagger} \quad \forall U, V \in \U(d).
\end{align}
The Choi matrix $C$ of the channel $\Psi$ satisfies
\begin{align}
    \label{eq:choi_sud_invariant}
    [C, U^{\otimes n}_{\mcI^n} \otimes V^{\otimes n}_{\mcO^n}] = 0 \quad \forall U, V \in \U(d).
\end{align}
This channel $\Psi$ can be implemented by using the quantum Schur transform~\cite{harrow2005applications,bacon2006efficient,bacon2007quantum}, defined as follows.
The tensor product $U^{\otimes n}$ can be considered as a representation of $\U(d)$ on the Hilbert space $(\CC^d)^{\otimes n}$, which is decomposed into irreducible representations (irreps) as
\begin{align}
    \label{eq:sw_duality_space}
    (\CC^d)^{\otimes n} &\cong \bigoplus_{\lambda \vdash_d n} V_\lambda \otimes \CC^{m_{\lambda}},\\
    U^{\otimes n} &\cong \bigoplus_{\lambda \vdash_d n} U_\lambda \otimes \1_{m_{\lambda}},
\end{align}
where $\lambda \vdash_d n$ means that $\lambda$ is a partition of $n$ into at most $d$ parts, $U_\lambda$ is an irrep of $\U(d)$ labeled by $\lambda$, $m_\lambda$ is the multiplicity of the irrep $U_\lambda$, $V_\lambda$ is the representation space corresponding to $U_\lambda$, and $\1_{m_\lambda}$ is the identity operator on the multiplicity space $\CC^{m_{\lambda}}$.
The quantum Schur transform $U_{\text{Sch}}^{(n)}$ is a unitary operator that transforms the computational basis $\ket{\vec{i}}$ of $(\CC^d)^{\otimes n}$ into the Schur basis $\ket{\lambda, p_\lambda, \psi_\lambda}$, where $\lambda$ labels the irrep $U_\lambda$, $\psi_\lambda$ labels the basis of the irrep space, and $p_\lambda$ labels the basis of the multiplicity space.
In the quantum circuit, the quantum Schur transform $U_{\text{Sch}}^{(n)}$ is represented as an isometry operator
\begin{align}
    U_{\text{Sch}}^{(n)}: (\CC^d)^{\otimes n} \to \mcR_n \otimes \mcP_n \otimes \mcV_n,
\end{align}
where $\mcR_n$ stores the irreps $\lambda\vdash_d n$, $\mcP_n$ stores the vectors $\ket{p_\lambda}\in \CC^{m_{\lambda}}$ for irreps $\lambda\vdash_d n$, and $\mcV_n$ stores the vectors $\ket{\psi_\lambda}\in V_\lambda$ for irreps $\lambda\vdash_d n$.
Due to Schur's lemma, the Choi matrix $C$ satisfying Eq.~\eqref{eq:choi_sud_invariant} is block-diagonalized in the Schur basis as
\begin{align}
    C\cong \bigoplus_{\lambda, \mu \vdash_d n} \1_{V_\lambda} \otimes \1_{V_\mu} \otimes C^{\lambda \mu}
\end{align}
using $C^{\lambda \mu} \in \End(\CC^{m_\lambda} \otimes \CC^{m_{\mu}})$.
Thus, the channel $\Psi$ can be written as
\begin{align}
    \Psi(\cdot) = U_\mathrm{Sch}^{(n)\dagger} \left[\bigoplus_{\lambda, \mu \vdash_d n} \pi_{V_\mu} \otimes \Psi^{\lambda\mu} \circ \Tr_{V_\lambda} \left( \Pi_\lambda \cdot \right)\right] U_\mathrm{Sch}^{(n)},
\end{align}
where $\Psi^{\lambda \mu}: \End(\CC^{m_{\lambda}}) \to \End(\CC^{m_{\mu}})$, $\pi_{V_\mu}$ is the maximally mixed state defined by $\pi_{\mu}\defeq \1_{V_\mu}/d_\mu$, and $\Pi_\lambda$ is the Young projector defined by $\Pi_\lambda \defeq U_\mathrm{Sch}^{(n)\dagger}(\1_{V_{\lambda}} \otimes \1_{m_\lambda}) U_\mathrm{Sch}^{(n)}$.
The channel $\Psi$ can be implemented by the quantum circuit shown in Fig.~\ref{fig:sud_invariant_channel}~(a) using the quantum channel $\widetilde{\Psi}: \bigoplus_{\lambda\vdash_d n} \End(\CC^{m_{\lambda}}) \to \bigoplus_{\mu \vdash_d n} \End(\CC^{m_{\mu}})$ defined by $\widetilde{\Psi} \defeq \bigoplus_{\lambda, \mu \vdash_d n} \Psi^{\lambda\mu}$ and the quantum Schur transform.

\begin{figure}
    \centering
    \includegraphics[width=\linewidth]{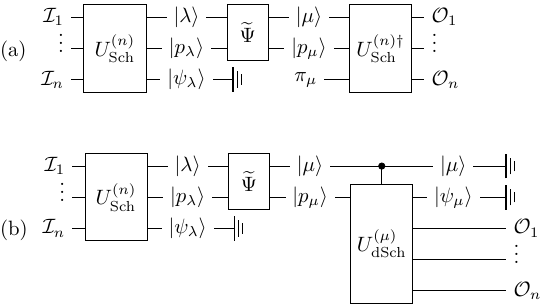}
    \caption{(a) The $\U(d)$-invariant channel can be implemented using the quantum Schur transforms and a quantum channel $\widetilde{\Psi}$ on the multiplicity space and the preparation of the maximally mixed state $\pi_\mu$.
    The ground symbol in the circuit represents the partial trace.
    This implementation cannot be streamed due to the presence of $\pi_\mu$.\\
    (b) The $\U(d)$-invariant channel can also be implemented using the quantum Schur transform, the dual Schur transform and a quantum channel $\widetilde{\Psi}$ on the multiplicity space.
    This implementation can be streamed as shown in the SM~\cite{supple}.}
    \label{fig:sud_invariant_channel}
\end{figure}

However, this implementation does not work for the quantum comb (or non-signalling channel).
Though the quantum Schur transform can be implemented in a streaming manner using the Clebsch--Gordan (CG) transforms as shown in Refs.~\cite{harrow2005applications,bacon2006efficient,bacon2007quantum}, the preparation of the maximally mixed state $\pi_{\mu}$ in the middle of the circuit should be done after the quantum channel $\widetilde{\Psi}$, which prohibits the overall circuit to be done in a streaming manner.

\subsection{Implementation of quantum comb with group invariance}
\label{subsec:canonical_decomposition}
We present the main result of this work, which shows the implementation of the quantum comb with $(G\times H)$-invariance for any compact groups $G$ and $H$.
We consider the ($G \times H$)-invariance on the Choi matrix of the quantum comb given by
\begin{multline}
    \label{eq:gh_symmetry}
    \sof[\big]{C, \of[\big]{\otimes_{i=1}^{n}\overline{\rho}_i(g)}_{\I^n} \otimes \of[\big]{\otimes_{j=1}^{n} \sigma_j(h)}_{\mcO^n}} = 0 \\ \quad \forall (g, h) \in G \times H,
\end{multline}
where $\rho_i$ and $\sigma_j$ are unitary representations of the compact groups $G$ and $H$ with the representation spaces $\mcI_i$ and $\mcO_j$ for $i, j\in [n]$, respectively, and $\overline{\rho}_i$ is the complex conjugate representation of $\rho_i$.
The ($G \times H$)-invariance~\eqref{eq:gh_symmetry} corresponds to the ($G \times H$)-invariance of the corresponding quantum channel $\Psi$ given by
\begin{align}
    \Psi\sof[\big]{\rho(g) \cdot \rho(g)^\dagger} = \sigma(h) \Psi(\cdot) \sigma(h)^\dagger
    \quad \forall (g, h) \in G \times H,
\end{align}
where $\rho(g) \defeq \bigotimes_{i=1}^{n} \rho_i(g)$ and $\sigma(h) \defeq \bigotimes_{j=1}^{n} \sigma_j(h)$.

To state the main result, we introduce the CG transforms corresponding to the representations $\rho_i$ of a compact group $G$.
Due to the Peter--Weyl theorem, the tensor product of representations $\bigotimes_{i=1}^{n} \rho_i$ is decomposed into the irreps as
\begin{align}
    \bigotimes_{i=1}^{n} V_{\rho_i} \cong \bigoplus_{\lambda\in \Irr{G}^{(n)}} V_{\lambda} \otimes \CC^{M_{\lambda}^{(n)}},
\end{align}
where $\Irr{G}^{(n)}$ is the set of irreps appearing in the tensor product $\bigotimes_{i=1}^{n} \rho_i$, $V_{\lambda}$ is the representation space of an irrep $\lambda$, and $M_{\lambda}^{(n)}$ is the multiplicity of an irrep $\lambda$ in the tensor product $\bigotimes_{i=1}^{n} \rho_i$.
This decomposition is obtained recursively by considering the following irrep decomposition of $\lambda_{i-1} \otimes \rho_i$ for $\lambda_{i-1}\in \Irr{G}^{(i-1)}$:
\begin{align}
    \label{eq:cg_isomorphism}
    V_{\lambda_{i-1}} \otimes V_{\rho_i} \cong \bigoplus_{\lambda_i \in \Irr{G}^{(i)}} V_{\lambda_i} \otimes \CC^{C_{\lambda_{i-1} \rho_i}^{\lambda_i}},
\end{align}
where $C_{\lambda_{i-1} \rho_i}^{\lambda_i}$ is the multiplicity of an irrep $\lambda_i$ in the tensor product $\lambda_{i-1} \otimes \rho_i$.
The CG transform is defined as the isomorphism corresponding to Eq.~\eqref{eq:cg_isomorphism}.
In the quantum circuit, it is represented as an isometry operator
\begin{align}
    \mathrm{CG}_{\rho_i}: \mcR_{i-1}^{G} \otimes \mcV_{i-1}^G \otimes V_{\rho_i} \to \mcR_{i-1}^{G} \otimes \mcC_{i}^{G} \otimes \mcR_i^G \otimes \mcV_{i}^{G},
\end{align}
where $\mcR_i^G$ stores the label of irreps $\lambda_i\in \Irr{G}^{(i)}$, $\mcV_i^G$ stores the vectors $\ket{\psi_{\lambda_i}} \in V_{\lambda_i}$ for irreps $\lambda_i\in \Irr{G}^{(i)}$, and $\mcC_{i}^{G}$ stores the vectors $\ket{a_i}\in \CC^{C_{\lambda_{i-1} \rho_i}^{\lambda_i}}$ for irreps $\lambda_{i-1}\in \Irr{G}^{(i-1)}$ and $\lambda_i \in \Irr{G}^{(i)}$.

Using the CG transforms, the quantum comb with $G\times H$ symmetry is implemented as shown in the following theorem.

\begin{figure*}
    \centering
    \includegraphics{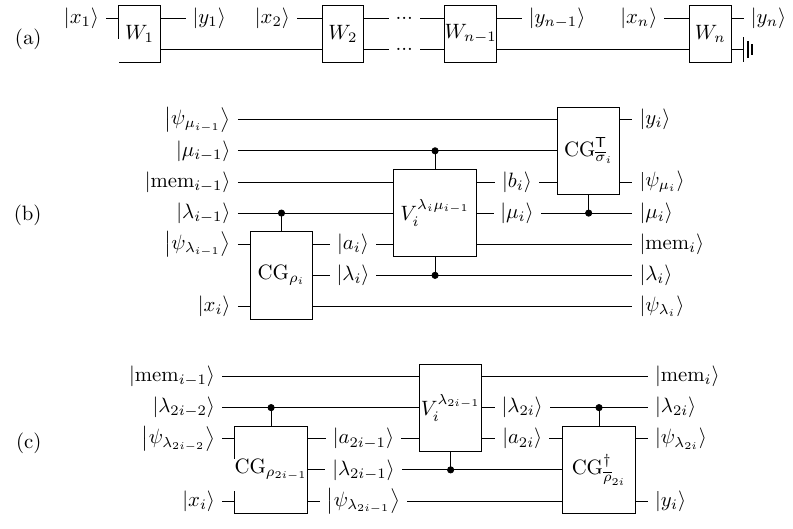}
    \caption{(a) Implementation of the quantum comb with the $(G\times H)$-invariance~\eqref{eq:gh_symmetry}, which is composed of isometries $W_i$ given in (b).
    The ground symbol in the circuit represents the partial trace.\\
    (b) The isometry $W_i$ for the $(G\times H)$-invariant quantum comb is given by the CG transforms and an arbitrary isometry operator $V_i^{\lambda_i \mu_{i-1}}$ shown in Eq.~\eqref{eq:V_i}.\\
    (c) The isometry $W_i$ for the $G$-covariant quantum comb is given by the CG transforms and an arbitrary isometry operator $V_i^{\lambda_{2i-1}}$.}
    \label{fig:comb_implementation}
\end{figure*}

\begin{theorem}
    \label{thm:comb_implementation}
    Any quantum comb with its Choi matrix $C\in \End(\mcI^n\otimes \mcO^n)$ satisfying Eq.~\eqref{eq:gh_symmetry} can be realized in the form of Fig.~\ref{fig:comb_implementation}~(a) with isometry $W_i$ defined in Fig.~\ref{fig:comb_implementation}~(b), where $V_i^{\lambda_i \mu_{i-1}}$ is an isometry operator corresponding to $\lambda_i\in \Irr{G}^{(i)}$ and $\mu_{i-1}\in\Irr{H}^{(i-1)}$, whose input and output spaces are given by
    \begin{align}
        \label{eq:V_i}
        V_i^{\lambda_i \mu_{i-1}}: \mcM_{i-1} \otimes \mcR_{i-1}^G \otimes \mcC_i^G \to \mcC_i^H \otimes \mcR_i^H \otimes \mcM_i,
    \end{align}
    where $\mcM_i$ is an auxiliary space storing the vector $\ket{\mathrm{mem}_i}\in \mcM_i$.
\end{theorem}
\begin{proof}[Proof sketch]
    We provide the main idea of the circuit construction for the case of $G = H = \U(d)$ and $\rho_i(g) = \sigma_j(g) = g$, and leave the full proof to Sec.~\ref{sec:proof_comb_implementation}.
    The difficulty of constructing the streaming circuit comes from the preparation of the maximally mixed state $\pi_{\mu}$, which is required to apply the inverse Schur transform $U_\mathrm{Sch}^{(n)\dagger}$.
    
    To circumvent this problem, we define the \emph{dual Schur transform}, which does not require the irrep space register as an input.
    Suppose $\{\ket{\psi_\lambda^{(i)}}\}$ is an orthonormal basis of the irrep space $V_{\lambda}$ and $\overline{U}_{\lambda}$ is the complex conjugate of $U_\lambda$ in this basis.
    Then, the maximally entangled state defined by $\ket{\phi^+_\lambda}\defeq {1\over \sqrt{d_\lambda}}\sum_{i\in [d_\lambda]} \ket{\psi_\lambda^{(i)}}\otimes \ket{\psi_\lambda^{(i)}}$ for $d_\lambda\coloneqq \dim V_\lambda$ is invariant under the tensor product $U_\lambda \otimes \overline{U}_{\lambda}$, i.e., the tensor product representation $U_\lambda \otimes \overline{U}_{\lambda}$ contains the trivial representation $\varnothing$.
    From Eq.~\eqref{eq:sw_duality_space}, we have
    \begin{align}
        (\CC^d)^{\otimes n} \otimes V_\lambda &\cong V_\lambda \otimes \bigoplus_{\mu \vdash_d n}  V_{\overline{\mu}} \otimes \CC^{m_{\mu}},
    \end{align}
    where $V_{\overline{\mu}}$ is the irrep space of $\overline{U}_{\mu}$.
    Since $V_\lambda \otimes V_{\overline{\lambda}}$ contains the trivial representation space $V_\varnothing\simeq \CC$, $(\CC^d)^{\otimes n} \otimes V_{\lambda}$ contains a subspace isomorphic to $\CC^{m_{\lambda}}$.
    Therefore, we can define the isometry
    \begin{align}
        U_{\mathrm{dSch}}^{(\lambda)}: \CC^{m_\lambda} \to (\CC^d)^{\otimes n} \otimes V_{\lambda},
    \end{align}
    and we call $U_{\mathrm{dSch}}^{(n)} \defeq \oplus_{\lambda \in \hat{G}^n} U_{\mathrm{dSch}}^{(\lambda)}$ the dual Schur transform.
    
    Using the dual Schur transform, the quantum channel $\Psi$ can be written as
    \begin{align}
        \Psi(\cdot) = \bigoplus_{\lambda, \mu \vdash_d n} \Tr_{V_\mu}\left[U_{\mathrm{dSch}}^{(\mu)} \Psi^{\lambda\mu} \circ \Tr_{V_\lambda} \left( \Pi_\lambda \cdot \right) U_{\mathrm{dSch}}^{(\mu) \dagger}\right],
    \end{align}
    which can be implemented by the quantum circuit shown in Fig.~\ref{fig:sud_invariant_channel}~(b) using the quantum channel $\widetilde{\Psi} \defeq \bigoplus_{\lambda, \mu \vdash_d n} \Psi^{\lambda\mu}$, the quantum Schur transform, and the dual Schur transform.
    When the Choi matrix $C$ satisfies the comb condition~\eqref{eq:comb_condition}, the quantum channel $\widetilde{\Psi}$ is implementable in a streaming manner.
    The quantum Schur transform is implemented by a sequence of CG transforms.
    As shown in the SM, the dual Schur transform is implemented by using a sequence of the dual CG transforms.
    By putting it all together, we obtain the implementation of the quantum comb shown in Fig.~\ref{fig:comb_implementation}.
\end{proof}

\subsection{Extension to $G$-covariant quantum comb}
\label{subsec:covariant_comb}
We consider the $G$-covariant quantum comb for any compact group $G$, which satisfies
\begin{align}
    \label{eq:g_covariance}
    \left[C, \of[\big]{\otimes_{i=1}^{n}\overline{\rho}_{2i-1}(g)}_{\I^n} \otimes \of[\big]{\otimes_{j=1}^{n} \overline{\rho}_{2j}(g)}_{\mcO^n}\right] = 0 \quad \forall g \in G,
\end{align}
where $\rho_i$ for $i\in [2n]$ are unitary representations of the compact group $G$.
The $G$-covariance~\eqref{eq:g_covariance} corresponds to the $G$-covariance of the corresponding quantum channel $\Psi$ given by
\begin{align}
    \Psi\sof[\big]{\rho(g) \cdot \rho(g)^\dagger} = \overline{\sigma}(g) \Psi(\cdot) \overline{\sigma}(g)^\dagger \quad \forall g\in G,
\end{align}
where $\rho(g) \defeq \bigotimes_{i=1}^{n} \rho_{2i-1}(g)$ and $\overline{\sigma}(g) \defeq \bigotimes_{j=1}^{n} \overline{\rho}_{2j}(g)$.
Since the $(G\times H)$-covariance with the representations
\begin{align}
    \begin{cases}
        \rho_{2i-1}'(g,h) &\defeq \rho_{i}(g),\\
        \rho'_{2i}(g,h) &\defeq \overline{\sigma}_{i}(h),
    \end{cases}
    \quad  \forall (g,h)\in G\times H,
\end{align}
is equivalent to the $(G\times H)$-invariance~\eqref{eq:gh_symmetry}, the $G$-covariant quantum comb can be considered as a generalization of the $(G\times H)$-invariant quantum comb.
The $G$-covariant quantum comb is also implemented in a streaming manner using the CG transforms and isometry operators on the multiplicity spaces, as shown below.

\begin{theorem}
    \label{thm:comb_implementation_covariant}
    Any quantum comb with its Choi matrix $C\in \End(\mcI^n\otimes \mcO^n)$ satisfying the $G$-covariance~\eqref{eq:g_covariance} can be realized in the form of Fig.~\ref{fig:comb_implementation}~(a) with isometry $W_i$ defined in Fig.~\ref{fig:comb_implementation}~(c), where $V_i^{\lambda_{2i-1}}$ is an isometry operator corresponding to $\lambda_{2i-1}\in \Irr{G}^{(2i-1)}$, whose input and output spaces are given by
    \begin{align}
        V_i^{\lambda_{2i-1}}: \mcM_{i-1} \otimes \mcR_{2i-1}^G \otimes \mcC_{2i-1}^G \to \mcM_{i} \otimes \mcR_{2i}^G \otimes \mcC_{2i}^G,
    \end{align}
    where $\mcM_{i}, \mcR_{i}^G, \mcC_{i}^G$ are defined similarly to Thm.~\ref{thm:comb_implementation}.
\end{theorem}
\begin{proof}
See Sec.~\ref{sec:proof_comb_implementation_covariant} for the proof.
\end{proof}

\section{Application: Parametrized quantum comb with group symmetry}
\label{sec:application}
Theorem~\ref{thm:comb_implementation} provides a canonical form of the quantum comb with $G\times H$ symmetry, which can be used to parametrize the $(G\times H)$-invariant quantum comb.
As an application, we consider the tasks of unitary inversion and unitary transposition, whose optimal protocols have the $\U(d)\times \U(d)$-invariance ($G=H=\U(d)$ for these tasks).
We assume that the dimension of the auxiliary space $\mcM_i$ is one-dimensional, and optimize the isometry operators $V_i^{\lambda_i \mu_{i-1}}$ in Eq.~\eqref{eq:V_i} to maximize the average fidelity of the output channel~\eqref{eq:comb_output} with the target channel $f(\mathcal{U}) = \mcU^{-1}$ or $f(\mcU) = \mcU\tp$ for $U\in \U(d)$, where $\mcU(\cdot) \defeq U\cdot U^\dagger$, $\mcU^{-1}(\cdot) \defeq U^{-1} \cdot U$, and $\mcU\tp(\cdot) \defeq U\tp \cdot U^*$.
The average fidelity is given by
\begin{align}
    F \defeq \int \dd U \, F_\mathrm{ch}\sof[\big]{\mathcal{C}(\mathcal{U}^{\times n}), f(\mcU)},
\end{align}
where $F_\mathrm{ch}$ is the channel fidelity and $\dd U$ is the Haar measure of $\U(d)$.
Then, we formulate a non-linear optimization problem for optimizing the comb fidelity following the methodology of \cite{mo2025parameterized}.

\begin{table*}
\begin{tabular}{c|c||c|c|c|c|c|c|c}
    $f(U)$   &   \diagbox{$d$}{$n$}     &   1       &  2        &  3       &  4       &  5        & 6           & 7           \\ \hline \hline
    \multirow{6}{*}{$U\tp$}     &   2   &\g  0.5    &\g   0.75  &\g0.933013&\g 1      & -         & -           & -           \\
                                &   3   &\g0.2222222&\g 0.407408&\g0.626597&\g0.799251&\g0.932387 & 0.998243    & 1.000000(0) \\
                                &   4   &\g  0.125  &\g 0.21875 &\g0.362903&\g0.543985&\g0.697606 & 0.828739(9) & ?           \\
                                &   5   &\g  0.08   &\g  0.136  &\g0.214954&\g0.331873&\g0.482961 &0.616832(1)  & ?           \\
                                &   6   &\g0.0555556&\g0.0925926&\g0.141901&\g0.209426&0.307098(3)&0.341168(7)  & ?           \\
                                &   7   &\g0.0408163&\g0.0670554&\g 0.10059&\g0.144193&0.202858(7)&0.286130(4)  & ?           \\ \hline
    \multirow{6}{*}{$U\ct$}     &   2   &\g   0.5   &\g   0.75  &\g0.933013&\g 1      & -         & -           & -           \\
                                &   3   &\g 0.222222&\g 0.333333&\g0.444445&\g0.555556&\g0.666667 &\g0.777777(7)&0.887137(1)  \\
                                &   4   &\g   0.125 &\g  0.1875 &\g  0.25  &\g0.3125  &\g0.375    &\g0.437500(0)&\g0.500000(0)\\
                                &   5   &\g   0.08  &\g  0.12   &\g  0.16  &\g 0.2    &\g0.24     &\g0.280000(0)&\g0.320000(0)\\
                                &   6   &\g0.0555556&\g0.0833334&\g0.111111&\g0.138889&\g0.166667 &\g0.194444(4)& ?           \\
                                &   7   &\g0.0408163&\g0.0612245&\g0.081632&\g0.102041&\g0.122449 &\g0.142857(1)& ?            
\end{tabular}
\caption{\textbf{Lower bounds from nonlinear optimization}. Fidelity according to nonlinear optimization in Julia.
The green values are the ones that match with the SDP and analytical results up to the fourth digit. These numbers should be treated as lower bounds for the optimal values.}
\label{table:nlopt}
\end{table*}

\begin{table*}
\begin{tabular}{c|c||c|c|c|c|c|c|c}
    $f(U)$   &   \diagbox{$d$}{$n$}     &   1                      &  2                      &  3                      &  4                     &  5                     &  6                          & 7                          \\ \hline \hline
    \multirow{6}{*}{$U\tp$}     &   2   &  \cellblue 0.5           & 0.749999(9)             & 0.933012(5)             & \cellblue 1            & -                      &  -                          & -                          \\
                                &   3   &  \cellblue 0.222222...   & 0.407407(3)             & 0.626596(4)             & 0.799250(3)            & 0.932375(8)            &  \cellred 0.998243          &  ?                         \\
                                &   4   &  \cellblue 0.125         & 0.218749(9)             & 0.362903(1)             & 0.544148(0)            & 0.697604(2)            &  ?                          &  ?                         \\
                                &   5   &  \cellblue 0.08          & 0.135999(9)             & 0.214953(5)             & 0.331870(6)            & 0.482926(8)            &  ?                          &  ?                         \\
                                &   6   &  \cellblue 0.0555555...  & 0.092592(5)             & 0.141901(1)             & 0.209438(5)            & ?                      &  ?                          &  ?                         \\
                                &   7   &  \cellblue 0.0408163...  & 0.067055(2)             & 0.100590(1)             & 0.144190(3)            & ?                      &  ?                          &  ?                         \\ \hline
    \multirow{6}{*}{$U\ct$}     &   2   &  \cellblue$0.5$          & 0.749999(9)             & 0.933012(5)             & \cellblue$1$           & -                      &  -                          &  -                         \\
                                &   3   &  \cellblue$0.222222...$  & \cellblue$0.333333...$  & 0.444444(3)             & 0.555555(5)            & 0.666666(4)            & \cellblue$\leq 0.777777...$ & \cellblue$\leq 0.888888...$\\
                                &   4   &  \cellblue$0.125$        & \cellblue$0.1875$       & \cellblue$0.25$         & $0.312499(9)$          & $0.374999(9)$          & \cellblue$\leq 0.4375$      & \cellblue$\leq 0.5$        \\
                                &   5   &  \cellblue$0.08$         & \cellblue$0.12$         & \cellblue$0.16$         & \cellblue$0.2$         & $0.240000(0)$          & \cellblue$\leq 0.28$        & \cellblue$\leq 0.32$       \\
                                &   6   &  \cellblue$0.0555555...$ & \cellblue$0.0833333...$ & \cellblue$0.111111...$  & \cellblue$0.138888...$ & \cellblue$0.166666...$ & \cellblue$\leq 0.194444...$ & \cellblue$\leq 0.222222...$\\
                                &   7   &  \cellblue$0.0408163...$ & \cellblue$0.0612244...$ & \cellblue$0.0816326...$ & \cellblue$0.102040...$ & \cellblue$0.122448...$ & \cellblue$0.142857...$      & \cellblue$\leq 0.163265...$\\
\end{tabular}
\caption{\textbf{SDP data and known analytical results}. The optimal fidelity of the universal \textbf{sequential} unitary transposition and inversion. The numerical calculation is done in Julia using MOSEK optimiser.
The blue numbers correspond to the analytical results given by $F_\mathrm{trans} = F_\mathrm{inv} = {2\over d^2}$ for $n=1$~\cite{chiribella2016optimal}, $F_\mathrm{trans}= F_\mathrm{inv}=1$ for $d=2, n=4$~\cite{yoshida2023reversing}, $F_\mathrm{inv} = {n+1\over d^2}$ for $n\leq d-1$~\cite{yoshida2024one}, and $F_\mathrm{inv}\leq {n+1\over d^2}$ for $n\geq d$~\cite{chen2025tight}.
The red value corresponds to the heuristic SDP result, which only gives a lower bound for the optimal value.
Notation $0.xxxxxx(y)$ means that the number has the form $0.xxxxxxy...$, i.e., the digit $y$ is not rounded.}
\label{table:sdp}
\end{table*}

We solve this optimization problem numerically using Julia with JuMP with Ipopt and Gurobi optimizers to solve the resulting nonlinear optimization problems. 
Our results are summarised in \cref{table:nlopt}.
We obtain the optimal average fidelities for unitary inversion and transposition for $d=3$ up to $n=7$ as shown in the SM~\cite{supple}, and find the exact unitary transposition protocol with $n=7$ queries.
This result is improved over the previous best result $n=13$ in Ref.~\cite{chen2024quantum}.
We also justify the assumption of $\dim \mcM_i = 1$ by comparing the obtained fidelities with that in the SDP approach~\cite{quintino2022deterministic, grinko2024linear} for $n\leq 5$ and analytical results~\cite{chiribella2008qca,quintino2022deterministic,yoshida2023reversing,yoshida2024one,chen2025tight} shown in \cref{table:sdp}.
The analytical results for the optimal fidelity of unitary transposition, denoted by $F_\mathrm{trans}$, and that of unitary inversion, denoted by $F_\mathrm{inv}$ are given as follows:
\begin{itemize}
    \item \cite{chiribella2016optimal} $F_\mathrm{trans} = F_\mathrm{inv} = {2\over d^2}$ for $n=1$,
    \item \cite{yoshida2023reversing} $F_\mathrm{trans} = F_\mathrm{inv} = 1$ for $d=2, n=4$,
    \item \cite{yoshida2024one} $F_\mathrm{inv} = {n+1\over d^2}$ for $n\leq d-1$,
    \item \cite{chen2025tight} $F_\mathrm{inv} \leq {n+1\over d^2}$ for $n\geq d$.
\end{itemize}

We also compare the number of variables in our approach with that in the naive approach~\cite{mo2025parameterized} to reproduce the same circuit.
Our approach significantly reduces the number of variables, which enables us to perform the optimization for a larger query number $n$, as shown in Appendix~\ref{sec:variable_count}.

\section{Conclusion and outlook}
\label{sec:conclusion}
We show that any $(G\times H)$-invariant quantum comb for any compact groups $G$ and $H$ can be implemented using the corresponding CG transforms and isometry operators on the multiplicity spaces, which is extended to the $G$-covariant quantum comb.
We show an application of this result to the parametrized quantum comb, which significantly reduces the number of variables in the optimization.
From this optimization, we obtain an exact qutrit-unitary transposition protocol with $n=7$ queries, which improves the previous best result $n=13$ in Ref.~\cite{chen2024quantum}.
Another application for the simulation of random unitary is shown in the concurrent work~\cite{grinko2025quantum} of some of the authors, and a similar circuit for the case of $G = H = \U(d)$ appears in the context of transforming isometry channels~\cite{yoshida2025universal}.

This work opens several future directions.
Given the ubiquitous nature of the symmetry in quantum physics~\cite{noether1971invariant, itzykson1966unitary} and the quantum comb in various settings such as the dynamical resource theory~\cite{chitambar2019quantum}, higher-order quantum transformation~\cite{taranto2025higher}, non-Markovian dynamics~\cite{pollock2018non}, and causal structure~\cite{oreshkov2012quantum}, we expect that our results will be widely used in the analysis and implementation of quantum information processing.
We also expect that the $G$-covariant quantum comb finds its applications in various settings, such as the streaming implementation of the $G$-covariant quantum channel and simulation of $G$-covariant dynamics of open quantum systems.
Our application of the canonical form of the quantum comb to the parametrized quantum comb can be considered as a truncation of the search space of the optimization by using the Schur basis.
It would be interesting to investigate the performance of this approach for other tasks and other groups, and to find a systematic way to choose the truncation of the search space.

\acknowledgments
This work was supported by the MEXT
Quantum Leap Flagship Program (MEXT QLEAP) JPMXS0118069605, JPMXS0120351339, Japan Society for the Promotion of Science (JSPS) KAKENHI Grants No. 23KJ0734 and No. 23K21643, FoPM, WINGS Program, the University of Tokyo, DAIKIN Fellowship Program, the University of Tokyo, IBM Quantum, NWO grant NGF.1623.23.025 (“Qudits in theory and experiment”) and NWO Vidi grant (Project No. VI.Vidi.192.109).

\let\oldaddcontentsline\addcontentsline
\renewcommand{\addcontentsline}[3]{}
\bibliography{main}
\let\addcontentsline\oldaddcontentsline

\clearpage
\onecolumngrid
\renewenvironment{widetext}{}{}

\appendix

\section{Proof of main results}
\label{sec:proof_main_results}

This appendix provides the proof of the main results.
We express the comb condition in the commutant of the tensor representation in Sec.~\ref{sec:comb_condition_symmetry}, and introduce the generalized Schur and dual Schur transforms corresponding to the tensor representation in Sec.~\ref{sec:generalized_schur}.
Using these ingredients, we construct the quantum circuit for the quantum comb with $G\times H$ symmetry and that for the $G$-covariant quantum comb, which correspond to the proofs of
Theorem~\ref{thm:comb_implementation} and Theorem~\ref{thm:comb_implementation_covariant} in Sec.~\ref{sec:proof_comb_implementation} and Sec.~\ref{sec:proof_comb_implementation_covariant}, respectively.

\subsection{Comb condition in the commutant of the tensor representation}
\label{sec:comb_condition_symmetry}

As shown in Sec.~\ref{sec:quantum_comb}, the quantum comb is characterized as a Choi matrix $C\in \End(\mcI^n \otimes \mcO^n)$ satisfying the comb condition given by Eq.~\eqref{eq:comb_condition}.
where $\mcI^n$ and $\mcO^n$ are joint Hilbert spaces given by $\mcI^n \defeq \bigotimes_{i=1}^n \mcI_i$ and $\mcO^n \defeq \bigotimes_{i=1}^n \mcO_i$, respectively, $C_{n} \defeq C$, and $C_{i-1} \defeq \frac{1}{d}\Tr_{\mcI_i \mcO_i} C_i$.
Suppose the Choi operator $C$ of a quantum comb satisfies the $(G,H)-$invariance given by Eq.~\eqref{eq:gh_symmetry}.
Then, $C$ belongs to the commutant of the tensor representation $\bigotimes_{i=1}^{n}\rho_i \otimes \bigotimes_{j=1}^{n} \sigma_j$, i.e.,
\begin{align}
    C &\in \mathrm{Comm}\left(\left(\otimes_{i=1}^{n}\overline{\rho}_i\right) \otimes \left(\otimes_{j=1}^{n} \sigma_j\right)\right)
    = \mathrm{span}\left\{\overline{E}^{\lambda}_{pq} \otimes \widetilde{E}^{\mu}_{rs} \; \middle| \;
    \begin{aligned}
        \lambda\in \Irr{G}^{(n)}, \mu\in \Irr{H}^{(n+1)},\\
        p, q\in \Paths(\lambda, \scB_L),\\
        r,s\in \Paths(\mu, \scB_R)
    \end{aligned} \right\},
\end{align}
where $E^{\lambda}_{pq}$ and $\widetilde{E}^\mu_{rs}$ are the orthogonal basis of the commutant of the tensor representation $\bigotimes_{i=1}^{n} \rho_i$ and $\bigotimes_{j=1}^{n} \sigma_j$, and $\scB_L$ and $\scB_R$ are the corresponding Bratteli diagrams, respectively, defined similarly to Eq.~\eqref{eq:def_E}.
Therefore, $C$ is written associated with a set of matrices $\{C^{\lambda\mu}\in \End[\CC^{M_\lambda^{(n)}} \otimes \CC^{M_\mu^{(n)}}] \mid \lambda\in \Irr{G}^{(n)}, \mu\in \Irr{H}^{(n)}\}$ as
\begin{align}
    \label{eq:decomposition_C}
    C = \sum_{\substack{\lambda\in \Irr{G}^{(n)}\\\mu\in \Irr{H}^{(n)}}} \sum_{\substack{p, q\in \Paths(\lambda, \scB_L)\\r, s\in \Paths(\mu, \scB_R)}} C^{\lambda\mu}_{pr, qs} {\overline{E}^{\lambda}_{pq} \over d_\lambda} \otimes {\widetilde{E}^{\mu}_{rs} \over d_\mu},
\end{align}
where $C^{\lambda\mu}_{pr, qs}$ are the matrix elements of $C^{\lambda\mu}$ given by
\begin{align}
    C^{\lambda\mu}_{pr, qs} \defeq \bra{q}\bra{s} C^{\lambda\mu} \ket{p}\ket{r}
\end{align}
for $\lambda\in\Irr{G}^{(n)}$, $\mu\in \Irr{H}^{(n)}$, $p, q\in \Paths(\lambda, \scB_L)$, and $r, s \in \Paths(\mu, \scB_R)$.
The comb condition~\eqref{eq:comb_condition} is translated into the following conditions on $\{C^{\lambda\mu}\}$.
\begin{widetext}
\begin{lemma}
    \label{lem:comb_condition_symmetry}
    The operator $C$ in the form of Eq.~\eqref{eq:decomposition_C} satisfies the comb condition~\eqref{eq:comb_condition} if and only if the set of matrices $\{C^{\lambda\mu}\in \End[\CC^{M_\lambda^{(n)}} \otimes \CC^{M_\mu^{(n)}}] \mid \lambda\in \Irr{G}^{(n)}, \mu\in \Irr{H}^{(n)}\}$ satisfies the following conditions:
    \begin{align}
        C^{\lambda\mu} &\succeq 0 \qquad \forall \lambda\in \Irr{G}^{(n)}, \forall \mu\in \Irr{H}^{(n)},\\
        \begin{split}
        \sum_{\mu_i\in \Irr{H}^{(i)}} & \sum_{a\in [C_{\mu_{i-1}}^{\mu_i}]} (\1_{M_{\lambda_i}^{(i)}}\otimes R_a^{\mu_i\to \mu_{i-1}}) C_{i}^{\lambda_i \mu_i} (\1_{M_{\lambda_i}^{(i)}}\otimes R_a^{\mu\to \beta})^\dagger = \\
        &= \sum_{\lambda_{i-1}\in \Irr{G}^{(i-1)}}\sum_{b\in [C_{\lambda_{i-1}}^{\lambda_i}]} {d_{\lambda_i} \over d_{\lambda_{i-1}}} (A_b^{\lambda_{i-1}\to\lambda_i}\otimes \1_{M_{\mu_{i-1}}^{(i-1)}}) C_{i-1}^{\lambda_{i-1} \mu_{i-1}} (A_b^{\lambda_{i-1}\to\lambda_i}\otimes \1_{M_{\mu_{i-1}}^{(i-1)}})^\dagger \\ & \hspace{+7cm}  \forall i\in [n], \forall \lambda_i\in \Irr{G}^{(i)}, \forall \mu_{i-1}\in \Irr{H}^{(i-1)},
        \end{split}\\
        C_0^{\varnothing\varnothing} &= 1,
    \end{align}
    where $R_a^{\mu_i\to \mu_{i-1}}: \CC^{M_{\mu_i}^{(i)}} \to \CC^{M_{\mu_{i-1}}^{(i-1)}}$ and $A_b^{\lambda_{i-1}\to\lambda_i}: \CC^{M_{\lambda_{i-1}}^{(i-1)}} \to \CC^{M_{\lambda_i}^{(i)}}$ are defined by
    \begin{align}
        R_a^{\mu_i\to \mu_{i-1}} &\defeq \sum_{p\in \Paths(\mu_{i-1}, \scB_R)} \ketbra{p}{p\xrightarrow{a}\mu_i},\\
        A_b^{\lambda_{i-1}\to\lambda_i} &\defeq \sum_{p\in \Paths(\lambda_{i-1}, \scB_L)} \ketbra{p\xrightarrow{b} \lambda_i}{p},
    \end{align}
    which corresponds to removing and adding an edge to a path, and $C_i^{\lambda \mu}$ is recursively defined as
    \begin{align}
        C_{n}^{\lambda \mu} &\defeq C^{\lambda \mu},\\
        C_{i-1}^{\lambda_{i-1}\mu_{i-1}} &\defeq {1\over d_{\mcI_i}}\sum_{\lambda_i\in \Irr{G}^{(i)}} \sum_{\mu_i\in \Irr{H}^{(i)}} \sum_{a\in [C_{\mu_{i-1}}^{\mu_{i}}]} \sum_{b\in [C_{\lambda_{i-1}}^{\lambda_i}]} (R_b^{\lambda_i \to \lambda_{i-1}}\otimes R_a^{\mu_i\to\mu_{i-1}}) C_i^{\lambda_i \mu_i} (R_b^{\lambda_i \to \lambda_{i-1}}\otimes R_a^{\mu_i\to\mu_{i-1}})^\dagger,
    \end{align}
    for $i\in [n]$, and $d_{\mcI_i} \coloneqq \dim \mcI_i$.
\end{lemma}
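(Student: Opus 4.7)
The plan is to translate each of the three defining properties of a quantum comb---positive semidefiniteness, the normalization $C_0 = 1$, and the partial trace recursion $\Tr_{\mcO_i} C_i = C_{i-1}\otimes \1_{\mcI_i}$---into equivalent conditions on the coefficient tensors $\{C^{\lambda\mu}\}$ appearing in the expansion~\eqref{eq:decomposition_C}. Applying \cref{lem:orthogonal_basis_properties} to each of the two factors shows that the matrix units $\{E^\lambda_{pq}\otimes \widetilde{E}^\mu_{rs}\}$, indexed by $\lambda\in \Irr{G}^{(n)}$, $\mu\in \Irr{H}^{(n)}$, $p,q\in \Paths(\lambda,\scB_L)$, $r,s\in\Paths(\mu,\scB_R)$, form an orthogonal basis of the commutant under the Hilbert--Schmidt inner product. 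Hence the coefficients in~\eqref{eq:decomposition_C} are unique and both sides of each condition can be compared by coefficient matching.

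For the PSD condition, I would use that the joint Schur transform $U_{\mathrm{Sch}}^{\rho_1,\ldots,\rho_n}\otimes U_{\mathrm{Sch}}^{\sigma_1,\ldots,\sigma_n}$ block-diagonalizes $C$ into $\bigoplus_{\lambda,\mu}\1_{V_\lambda}\otimes\1_{V_\mu}\otimes C^{\lambda\mu}$, so $C\succeq 0$ is equivalent to $C^{\lambda\mu}\succeq 0$ for every irrep pair $(\lambda,\mu)$. The normalization $C_0 = 1$ is immediate because $\Irr{G}^{(0)} = \Irr{H}^{(0)} = \{\varnothing\}$ and $V_\varnothing\simeq \CC$, so the only surviving coefficient is $C_0^{\varnothing\varnothing}$, acting on a one-dimensional space.

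The main step is the partial trace recursion. On the left-hand side $\Tr_{\mcO_i}C_i$, \eqref{eq:E_partial_trace} of \cref{lem:tensor_and_partial_trace}, applied to the second tensor factor, reduces each $\widetilde{E}^{\mu_i}_{rs}$ to $\sum_{\mu_{i-1}}(d_{\mu_i}/d_{\mu_{i-1}})\widetilde{E}^{\mu_{i-1}}_{r's'}$, restricted to path pairs whose final edge at level $i$ coincides (the edge label $a$). On the right-hand side $C_{i-1}\otimes\1_{\mcI_i}$, \eqref{eq:E_tensor_I} applied to the first factor expands $E^{\lambda_{i-1}}_{pq}\otimes\1_{\mcI_i}$ as a sum over one-edge extensions $p\mapsto p\xrightarrow{b}\lambda_i$, $q\mapsto q\xrightarrow{b}\lambda_i$. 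Matching coefficients of the basis elements $E^{\lambda_i}_{p'q'}\otimes\widetilde{E}^{\mu_{i-1}}_{r's'}$ yields a scalar identity between entries of $C_i^{\lambda_i\mu_i}$ and $C_{i-1}^{\lambda_{i-1}\mu_{i-1}}$, which I then rewrite as an operator identity: the path-retraction $R_a^{\mu_i\to\mu_{i-1}}:\ket{r'\xrightarrow{a}\mu_i}\mapsto\ket{r'}$ acts on the second multiplicity factor on the LHS, while the path-extension $A_b^{\lambda_{i-1}\to\lambda_i}:\ket{p}\mapsto\ket{p\xrightarrow{b}\lambda_i}$ acts on the first multiplicity factor on the RHS, and the factor $d_{\lambda_i}/d_{\lambda_{i-1}}$ tracks the mismatch between the normalizations $1/d_{\lambda_i}$ and $1/d_{\lambda_{i-1}}$ used at the two adjacent levels. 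The stated recursive formula for $C_{i-1}^{\lambda_{i-1}\mu_{i-1}}$, which corresponds to $\frac{1}{d_{\mcI_i}}\Tr_{\mcI_i\mcO_i}C_i$, is verified analogously by applying \eqref{eq:E_partial_trace} to both tensor factors.

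The main obstacle will be the combinatorial bookkeeping of Gelfand--Tsetlin paths in the two Bratteli diagrams $\scB_L$ and $\scB_R$: one must recognize that the scalar coefficient identity lifts compactly to an operator equation in which $R_a$ appears asymmetrically on one factor and $A_b$ on the other, and carefully track the dimension ratios produced by the normalizations $1/d_\lambda$, $1/d_\mu$ in~\eqref{eq:decomposition_C} together with the factors $d_{\lambda_i}/d_{\lambda_{i-1}}$ and $d_{\mu_i}/d_{\mu_{i-1}}$ arising from the partial traces. Once the $R_a$ and $A_b$ maps are identified as the restriction and embedding morphisms between multiplicity spaces at adjacent Bratteli levels, the rest of the argument is algebraic manipulation.
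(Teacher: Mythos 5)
Your proposal is correct and follows essentially the same route as the paper: the paper's own proof is a one-line deferral to Lemmas~\ref{lem:orthogonal_basis_properties} and~\ref{lem:tensor_and_partial_trace} (following Thm.~S7 of Ref.~\cite{yoshida2023reversing}), and your sketch is precisely that argument spelled out --- block-diagonalization for positivity, triviality of the level-$0$ block for normalization, and coefficient matching via Eqs.~\eqref{eq:E_tensor_I} and~\eqref{eq:E_partial_trace} applied asymmetrically to the two tensor factors, with $R_a$ and $A_b$ as the path retraction and extension maps. You correctly identify the only delicate points (uniqueness of coefficients from orthogonality, and the dimension ratios coming from the $1/d_\lambda$, $1/d_\mu$ normalizations in Eq.~\eqref{eq:decomposition_C}), so no gap.
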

\end{widetext}
\begin{proof}
    This lemma directly follows from Lems.~\ref{lem:orthogonal_basis_properties} and \ref{lem:tensor_and_partial_trace}, similarly to Thm.~S7 in Ref.~\cite{yoshida2023reversing}.
\end{proof}

Similarly, we consider the $G$-covariant quantum comb, which satisfies Eq.~\eqref{eq:g_covariance}.
Then, $C$ belongs to the commutant of the tensor representation $\bigotimes_{i=1}^{2n}\overline{\rho}_{i}$, i.e.,
\begin{align}
    C &\in \mathrm{Comm}\left(\otimes_{i=1}^{2n}\overline{\rho}_i\right) 
    = \mathrm{span}\left\{\overline{E}^{\lambda}_{pq} \; \middle| \;\lambda\in \Irr{G}^{(2n)}, p,q\in \Paths(\lambda, \scB) \right\},
\end{align}
where $\scB$ is the Bratteli diagram corresponding to the tensor representation $\otimes_{i=1}^{2n} \rho_i$.
Therefore, $C$ is written associated with a set of matrices $\{C^{\lambda}\in \End[\CC^{M_\lambda^{(2n)}}] \mid \lambda\in \Irr{G}^{(2n)}\}$ as
\begin{align}
    \label{eq:decomposition_C_G}
    C = \sum_{\lambda\in \Irr{G}^{(2n)}} \sum_{p, q\in \Paths(\lambda, \scB)} C^{\lambda}_{pq} {\overline{E}^{\lambda}_{pq} \over d_\lambda},
\end{align}
where $C^{\lambda}_{pq}$ are the matrix elements of $C^{\lambda}$ given by
\begin{align}
    C^{\lambda}_{pq} = \bra{q} C^{\lambda} \ket{p} \quad \forall p, q\in \Paths(\lambda, \scB).
\end{align}
The comb condition~\eqref{eq:comb_condition} is translated into the following conditions on $\{C^{\lambda}\}$.
\begin{widetext}
\begin{lemma}
    \label{lem:comb_condition_covariance}
    The operator $C$ in the form of Eq.~\eqref{eq:decomposition_C_G} satisfies the comb condition~\eqref{eq:comb_condition} if and only if the set of matrices $\{C^{\lambda}\in \End[\CC^{M_\lambda^{(2n)}}] \mid \lambda\in \Irr{G}^{(2n)}\}$ satisfies the following conditions:
    \begin{align}
        C^{\lambda} &\succeq 0 \quad \forall \lambda\in \Irr{G}^{(2n)},\\
        \begin{split}
        \sum_{\lambda_{2i}\in \Irr{G}^{(2i)}} & \sum_{a\in [C_{\lambda_{2i-1}}^{\lambda_{2i}}]} (R_a^{\lambda_{2i}\to \lambda_{2i-1}}) C_{i}^{\lambda_{2i}} (\1_{M_{\lambda_{2i}}^{(2i)}}\otimes R_a^{\lambda_{2i}\to \lambda_{2i-1}})^\dagger\\
        &= \sum_{\lambda_{2i-2}\in \Irr{G}^{(2i-2)}}\sum_{b\in [C_{\lambda_{2i-2}}^{\lambda_{2i-1}}]} {d_{\lambda_{2i-1}} \over d_{\lambda_{2i-2}}} (A_b^{\lambda_{2i-2}\to\lambda_{2i-1}}) C_{i-1}^{\lambda_{2i-2}} (A_b^{\lambda_{2i-2}\to\lambda_{2i-1}})^\dagger \quad \forall i\in [n], \forall \lambda_{2i-1}\in \Irr{G}^{(2i-1)},
        \end{split}\\
        C_0^{\varnothing} &= 1,
    \end{align}
    where $R_a^{\lambda_{2i}\to \lambda_{2i-1}}: \CC^{M_{\lambda_{2i}}^{(2i)}} \to \CC^{M_{\lambda_{2i-1}}^{(2i-1)}}$ and $A_b^{\lambda_{2i-2}\to\lambda_{2i-1}}: \CC^{M_{\lambda_{2i-2}}^{(2i-2)}} \to \CC^{M_{\lambda_{2i-1}}^{(2i-1)}}$ are defined in Lem.~\ref{lem:comb_condition_symmetry}, and $C_i^{\lambda_{2i}}$ is recursively defined as
    \begin{align}
        C_n^{\lambda} &\defeq C^{\lambda},\\
        C_{i-1}^{\lambda_{2i-2}} &\defeq {1\over d_{\mcI_i}}\sum_{\substack{\lambda_{2i-1}\in \Irr{G}^{(2i-1)} \\ \lambda_{2i}\in \Irr{G}^{(2i)}}} \sum_{\substack{a\in [C_{\lambda_{2i-1}}^{\lambda_{2i}}] \\ b\in [C_{\lambda_{2i-2}}^{\lambda_{2i-1}}]}} (R_b^{\lambda_{2i-1} \to \lambda_{2i-2}} R_a^{\lambda_{2i}\to\lambda_{2i-1}}) C_i^{\lambda_{2i}} (R_b^{\lambda_{2i-1} \to \lambda_{2i-2}} R_a^{\lambda_{2i}\to\lambda_{2i-1}})^\dagger,
    \end{align}
    for $i\in [n]$.
\end{lemma}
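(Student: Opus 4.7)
The plan is to specialize the argument of Lemma~\ref{lem:comb_condition_symmetry} to the $G$-covariant setting, where the product of two Bratteli diagrams $\scB_L, \scB_R$ collapses to the single diagram $\scB$ attached to the alternating sequence $(\rho_{2i-1}\text{ on }\mcI_i,\rho_{2i}\text{ on }\mcO_i)_{i\in[n]}$. Because $C$ commutes with $\bigotimes_{i=1}^{2n}\rho_i$, Schur's lemma together with Lemma~\ref{lem:orthogonal_basis_properties} forces the expansion~\eqref{eq:decomposition_C_G}, and in that expansion $C\succeq 0$ is equivalent to $C^\lambda\succeq 0$ for every $\lambda\in\Irr{G}^{(2n)}$, since $\{E^\lambda_{pq}/d_\lambda\}$ behave as matrix units along the $\lambda$-blocks.

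Second, the intermediate operator $C_i$ lives on the first $2i$ tensor factors and admits an analogous decomposition indexed by $\lambda_{2i}\in\Irr{G}^{(2i)}$. The causal condition $\Tr_{\mcO_i} C_i = C_{i-1}\otimes\1_{\mcI_i}$ is an identity at level $2i-1$ of $\scB$, and I would expand both sides in the common basis $\{E^{\lambda_{2i-1}}_{rs}/d_{\lambda_{2i-1}}\}$. For the left-hand side, applying~\eqref{eq:E_partial_trace} to each $E^{\lambda_{2i}}_{pq}$ produces the edge-removal operators $R_a^{\lambda_{2i}\to\lambda_{2i-1}}$ together with a factor $d_{\lambda_{2i}}/d_{\lambda_{2i-1}}$ that exactly cancels the $1/d_{\lambda_{2i}}$ from the normalization of the expansion of $C_i$. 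For the right-hand side, applying~\eqref{eq:E_tensor_I} to each $E^{\lambda_{2i-2}}_{r's'}\otimes\1_{\mcI_i}$ produces the edge-addition operators $A_b^{\lambda_{2i-2}\to\lambda_{2i-1}}$ together with the ratio $d_{\lambda_{2i-1}}/d_{\lambda_{2i-2}}$. Matching the resulting coefficients of each $E^{\lambda_{2i-1}}_{rs}$ yields precisely the recursion stated in the lemma.

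The recursive definition of $C_{i-1}^{\lambda_{2i-2}}$ in terms of $C_i^{\lambda_{2i}}$ is obtained by iterating~\eqref{eq:E_partial_trace} twice (once for $\Tr_{\mcO_i}$, once for $\Tr_{\mcI_i}$), producing the composition $R_b^{\lambda_{2i-1}\to\lambda_{2i-2}} R_a^{\lambda_{2i}\to\lambda_{2i-1}}$ and an overall prefactor $1/d_{\mcI_i}$ that absorbs the normalization in $C_{i-1}=\tfrac{1}{d_{\mcI_i}}\Tr_{\mcI_i\mcO_i} C_i$. Finally, $C_0=1$ reduces trivially to $C_0^\varnothing=1$ because the only length-$0$ path is empty and $d_\varnothing=1$. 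The converse direction is automatic: every family $\{C^\lambda\}$ satisfying the listed conditions reconstructs a valid Choi matrix $C$ via~\eqref{eq:decomposition_C_G}, and the same chain of translations read in reverse verifies each comb constraint.

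The main obstacle is purely book-keeping of dimension factors: one must track the $1/d_\lambda$ from the basis normalization, the $d_\lambda/d_\mu$ factors from~\eqref{eq:E_partial_trace}, the $d_{\lambda_{2i-1}}/d_{\lambda_{2i-2}}$ factor from tensoring with the identity, and the $1/d_{\mcI_i}$ from the recursive definition, and confirm that they combine as asserted in the lemma statement. Once this cancellation is verified, the proof becomes a verbatim specialization of Lemma~\ref{lem:comb_condition_symmetry}, and one concludes by citing it exactly as the authors indicate.
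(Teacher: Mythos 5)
Your proposal is correct and follows essentially the same route as the paper, whose own proof simply states that the lemma follows from Lemmas~\ref{lem:orthogonal_basis_properties} and~\ref{lem:tensor_and_partial_trace} in the same way as Lemma~\ref{lem:comb_condition_symmetry}; you have in fact spelled out the intended specialization (single Bratteli diagram $\scB$ with alternating levels for $\mcI_i$ and $\mcO_i$, coefficient matching in the $E^{\lambda_{2i-1}}_{rs}/d_{\lambda_{2i-1}}$ basis, and the dimension-factor bookkeeping) in more detail than the authors do. The only cosmetic imprecision is attributing the $d_{\lambda_{2i-1}}/d_{\lambda_{2i-2}}$ factor to Eq.~\eqref{eq:E_tensor_I} itself rather than to the mismatch between the $1/d_{\lambda_{2i-2}}$ and $1/d_{\lambda_{2i-1}}$ normalizations of the two expansions, which does not affect correctness.
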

\end{widetext}
\begin{proof}
    This lemma directly follows from Lems.~\ref{lem:orthogonal_basis_properties} and \ref{lem:tensor_and_partial_trace}, similarly to Lem.~\ref{lem:comb_condition_symmetry}.
\end{proof}

\subsection{Generalized Schur transforms and dual Schur transforms}
\label{sec:generalized_schur}
We define generalized Clebsch--Gordan (CG) transforms corresponding to a unitary representation $\rho: G\to \End(V_\rho)$ for a compact group $G$.
As shown in Eq.~\eqref{eq:decomposition_tensor_representation}, the tensor product of an irrep $\mu\in \Irr{G}$ and $\rho$ is decomposed into irreps as
\begin{align}
    V_\mu \otimes V_\rho \cong \bigoplus_{\lambda \in \Irr{G}} V_\lambda \otimes \CC^{c_{\mu\rho}^{\lambda}},
\end{align}
i.e., there exists an isomorphism  $\mathrm{CG}_{\mu, \rho}: V_\mu \otimes V_\rho \to \bigoplus_{\lambda\in \Irr{G}} V_\lambda \otimes \CC^{c_{\mu\rho}^{\lambda}}$.
The generalized CG transform is defined as a unitary operator $\widetilde{\mathrm{CG}}_{\rho}^{(n)}\defeq \bigoplus_{\mu\in \Irr{G}^{(n-1)}} \mathrm{CG}_{\mu, \rho}$, whose input and output spaces are given by
\begin{align}
    \widetilde{\mathrm{CG}}_{\rho}^{(n)}: \bigoplus_{\mu\in \Irr{G}^{(n-1)}} V_\mu \otimes V_\rho &\to \bigoplus_{\mu\in \Irr{G}^{(n-1)}} \bigoplus_{\lambda\in \Irr{G}^{(n)}} V_\lambda \otimes \CC^{c_{\mu\rho}^{\lambda}}.
\end{align}
In the quantum circuit, the input and output spaces are given by
\begin{align}
    \mathrm{CG}_{\rho}^{(n)}: &\CC^{\abs{\Irr{G}^{(n-1)}}} \otimes V_{n-1} \otimes V_\rho \to \CC^{\abs{\Irr{G}^{(n-1)}}} \otimes \CC^{C_n} \otimes \CC^{\abs{\Irr{G}^{(n)}}} \otimes V_n,
\end{align}
where $\CC^{\abs{\Irr{G}^{(n-1)}}}$ and $\CC^{\abs{\Irr{G}^{(n)}}}$ are registers storing the labels of the irreps, $V_{n-1}$ and $V_n$ are registers storing the vectors in the representation spaces, and $\CC^{C_n}$ is a register storing the multiplicity of the irreps, satisfying
\begin{align}
    \dim V_{n-1} &\geq \max_{\mu\in \Irr{G}^{(n-1)}} d_\mu,\\
    \dim V_n &\geq \max_{\lambda\in \Irr{G}^{(n)}} d_\lambda,\\
    C_n &\geq \max_{\mu\in \Irr{G}^{(n-1)}, \lambda\in \Irr{G}^{(n)}} c_{\mu\rho}^{\lambda}.
\end{align}
The superscript $n$ will be omitted when it is clear from the context.

The generalized CG transform can be represented in a quantum circuit as
\begin{align}
    \begin{split}
    \includegraphics{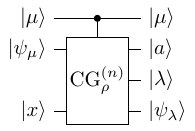}
    \end{split}.
\end{align}
By combining the generalized CG transforms, we define the \emph{generalized Schur transform} as
\begin{align}
    \begin{split}
    \includegraphics{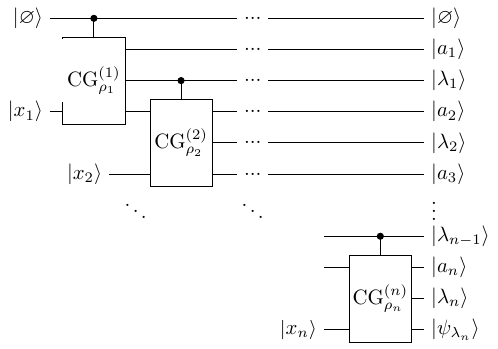}
    \end{split},
\end{align}
which implements $U_\mathrm{Sch}^{\rho_1, \ldots, \rho_n}: \bigotimes_{i=1}^{n} V_{\rho_i} \to \bigoplus_{\lambda\in \Irr{G}^{(n)}} V_\lambda \otimes \CC^{M_{\lambda}^{(n)}}$.
This is a basis change from the computational basis of $\bigotimes_{i=1}^{n} V_{\rho_i}$ to the basis of $\bigoplus_{\lambda\in \Irr{G}^{(n)}} V_\lambda \otimes \CC^{M_{\lambda}^{(n)}}$, where the basis in $\CC^{M_{\lambda}^{(n)}}$ is represented as
\begin{align}
    \ket{\varnothing}\ket{a_1}\ket{\lambda_1} \cdots \ket{a_{n}}\ket{\lambda_n},
\end{align}
which corresponds to the basis vector shown in Eq.~\eqref{eq:path_vector}.

We also introduce the CG transform corresponding to the conjugate representation $\overline{\rho}: g\in G \mapsto \overline{\rho(g)} \in \End(V_{\overline{\rho}})$, which is also called the dual CG transform.
We consider irreps $\lambda, \nu\in \Irr{G}$ and the irreducible decomposition of the tensor product of $\lambda$ and the conjugate representation $\overline{\nu}$ of $\nu$ as
\begin{align}
    V_\lambda \otimes V_{\overline{\nu}} &\cong \bigoplus_{\mu\in \Irr{G}} V_\mu \otimes \CC^{c_{\lambda\overline{\nu}}^{\mu}},
\end{align}
and consider the isomorphism $\mathrm{CG}_{\lambda, \overline{\nu}}: V_\lambda \otimes V_{\overline{\nu}} \to \bigoplus_{\mu\in \Irr{G}} V_\mu \otimes \CC^{c_{\lambda\overline{\nu}}^{\mu}}$.
By using Eq.~\eqref{eq:littlewood_richardson}, we can show that $c_{\lambda\overline{\nu}}^{\mu} = c_{\mu\nu}^{\lambda}$ holds, i.e.,
\begin{align}
    \CC^{c_{\lambda\overline{\nu}}^{\mu}} \simeq \CC^{c_{\mu\nu}^{\lambda}}.
\end{align}
We also have
\begin{align}
    V_{\nu} \simeq V_{\overline{\nu}}
\end{align}
as linear spaces\footnote{We denote $A \cong B$ to represent that $A$ and $B$ are isomorphic as representation spaces, while we denote $A\simeq B$ to represent that $A$ and $B$ are isomorphic as linear spaces.}.
We can take the orthonormal bases $\{\ket{\psi_\nu^{(i)}}\}$ of $V_{\nu} \simeq V_{\overline{\nu}}$ and $\{\ket{a}\}$ of $\CC^{c_{\lambda\overline{\nu}}^{\mu}} \simeq \CC^{c_{\mu\nu}^{\lambda}}$ such that [see Eq.~(10), p. 298 in Ref.~\cite{klimyk1995representations}]
\begin{align}
    \bra{\psi_{\mu}}\bra{a} \mathrm{CG}_{\lambda, \overline{\nu}}\ket{\psi_\lambda}\ket{\psi_{\nu}^{(i)}} = \sqrt{d_\mu \over d_\lambda}\overline{\bra{\psi_\lambda}\bra{a} \mathrm{CG}_{\mu, \nu}\ket{\psi_\mu}\ket{\psi_\nu^{(i)}}}
\end{align}
holds for all $\ket{\psi_\mu}\in V_\mu, \ket{\psi_\lambda}\in V_{\lambda}, i, a$.
In particular, we consider the trivial representation $\mu = \varnothing$ and $\nu = \lambda$.
Since $c_{\lambda \overline{\lambda}}^{\varnothing} = c_{\varnothing \overline{\lambda}}^{\lambda} = 1$, we have
\begin{align}
    \bra{\psi_{\varnothing}} \mathrm{CG}_{\lambda, \overline{\lambda}}\ket{\psi_{\lambda}^{(i)}} \ket{\psi_{\lambda}^{(j)}} &= \sqrt{1\over d_\lambda} \overline{\bra{\psi_{\lambda}^{(i)}} \mathrm{CG}_{\lambda, \overline{\lambda}}\ket{\psi_{\varnothing}} \ket{\psi_{\lambda}^{(j)}}} 
    = {\delta_{ij} \over \sqrt{d_\lambda}} \quad \forall i,j,
\end{align}
i.e.,
\begin{align}
    \mathrm{CG}_{\lambda, \overline{\lambda}}^\dagger \ket{\psi_{\varnothing}} = {1\over \sqrt{d_\lambda}} \sum_{i\in [d_\lambda]} \ket{\psi_{\lambda}^{(i)}}\otimes \ket{\psi_\lambda^{(i)}},
\end{align}
where $\ket{\psi_\varnothing}$ is a vector in $V_\varnothing\cong \CC$.
Similarly, for any (not necessarily irreducible) representation $\rho: G\to \End(V_\rho)$, we can take the orthonormal bases $\{\ket{\psi_\rho^{(i)}}\}$ of $V_{\rho} \simeq V_{\overline{\rho}}$ and $\{\ket{a}\}$ of $\CC^{c_{\lambda\overline{\rho}}^{\mu}} \simeq \CC^{c_{\mu\rho}^{\lambda}}$ such that
\begin{align}
    \bra{\psi_{\mu}}\bra{a} \mathrm{CG}_{\lambda, \overline{\rho}}\ket{\psi_\lambda}\ket{\psi_{\rho}^{(i)}} = \sqrt{d_\mu \over d_\lambda}\overline{\bra{\psi_\lambda}\bra{a} \mathrm{CG}_{\mu, \rho}\ket{\psi_\mu}\ket{\psi_\rho^{(i)}}}
\end{align}
holds for all $\ket{\psi_\mu}\in V_\mu, \ket{\psi_\lambda}\in V_{\lambda}, i, a$,
and we obtain
\begin{align}
    \mathrm{CG}_{\lambda,\overline{\rho}}^\sfT \ket{\psi_{\mu}} \ket{a}\nonumber
    &=\sum_{i\in [d_\lambda]} \sum_{j\in [d_\rho]} \ket{\psi_\lambda^{(i)}}\ket{\psi_\rho^{(j)}} \bra{\psi_\lambda^{(i)}}\bra{\psi_\rho^{(j)}} \mathrm{CG}_{\lambda,\overline{\rho}}^\sfT \ket{\psi_{\mu}} \ket{a}\\
    &= \sqrt{d_\mu \over d_\lambda} \sum_{i\in [d_\lambda]} \sum_{j\in [d_\rho]} \ket{\psi_\lambda^{(i)}}\ket{\psi_\rho^{(j)}} \overline{\bra{\psi_\lambda^{(i)}}\bra{a} \mathrm{CG}_{\mu, \rho}\ket{\psi_\mu}\ket{\psi_\rho^{(j)}}}\\
    &= \sqrt{d_\mu \over d_\lambda} \sum_{i\in [d_\lambda]} \sum_{j\in [d_\rho]} \ket{\psi_\rho^{(j)}} \otimes \overline{\bra{a} \mathrm{CG}_{\mu, \rho}\ket{\psi_\mu}\ket{\psi_\rho^{(j)}}},
    \label{eq:CG_conjugate}
\end{align}
where $\mathrm{CG}_{\lambda,\overline{\rho}}^\sfT$ is the transpose of $\mathrm{CG}_{\lambda,\overline{\rho}}^\sfT$ defined by
\begin{align}
    \bra{\psi_\lambda^{(i)}}\bra{\psi_\rho^{(j)}} \mathrm{CG}_{\lambda,\overline{\rho}}^\sfT \ket{\psi_{\mu}} \ket{a} = \bra{\psi_{\mu}} \bra{a} \mathrm{CG}_{\lambda,\overline{\rho}} \ket{\psi_\lambda^{(i)}}\ket{\psi_\rho^{(j)}},
\end{align}
and the overline represents the complex conjugate in the basis $\{\ket{\psi_\rho^{(j)}}\}_j$.
For $\mu = \varnothing$, we have
\begin{align}
    \mathrm{CG}_{\lambda,\overline{\rho}}^\sfT \ket{\psi_{\varnothing}} \ket{a}
    &= {1\over \sqrt{d_\lambda}} \sum_{j\in [d_\rho]} \ket{\psi_{\rho}^{(j)}} \otimes \overline{\bra{a}\mathrm{CG}_{\varnothing, \rho}\ket{\psi_\varnothing}\ket{\psi_\rho^{(j)}}}.
\end{align}
By composing the inverse of dual CG transforms, we define the \emph{dual Schur transform} as
\begin{align}
    \label{eq:generalied_dual_Schur}
    \begin{split}
        \includegraphics[width=0.5\linewidth]{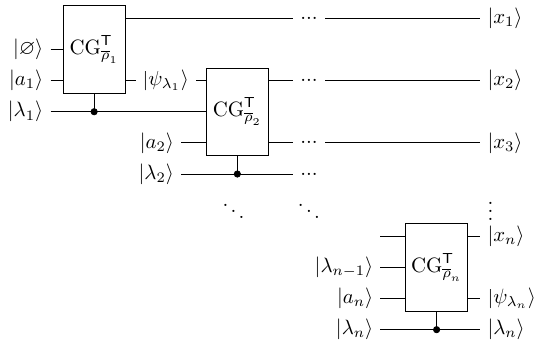}
    \end{split},
\end{align}
which implements $U_\mathrm{dSch}^{\rho_1, \ldots, \rho_n}: \bigoplus_{\lambda\in \Irr{G}^{(n)}} \CC^{M_\lambda^{(n)}} \to \bigotimes_{i=1}^{n} V_{\rho_i} \otimes \bigoplus_{\lambda\in \Irr{G}^{(n)}} V_\lambda$ such that
\begin{align}
    &U_\mathrm{dSch}^{\rho_1, \ldots, \rho_n} \ket{p_\lambda}\nonumber\\
    &= {1\over \sqrt{d_\lambda}} \sum_{\vec{x}} \ket{\vec{x}} \otimes \overline{(\bra{\lambda}\otimes \bra{p_\lambda}) U_\mathrm{Sch}^{\rho_1, \ldots, \rho_n}\ket{\vec{x}}}\\
    &= {1\over \sqrt{d_\lambda}} \sum_{i\in [d_\lambda]} \left(U_\mathrm{Sch}^{\rho_1, \ldots, \rho_n}\right)^\dagger \ket{\lambda, p_\lambda, \psi_\lambda^{(i)}} \otimes \ket{\psi_\lambda^{(i)}},
    \label{eq:dual_schur_action}
\end{align}
where $\{\ket{\vec{x}}\} = \{\ket{x_1}\otimes \cdots \otimes \ket{x_n}\}$ is the computational basis of $\bigotimes_{i=1}^{n} V_{\rho_i}$, we use the equality
\begin{align}
    \label{eq:maximally_entangled_basis_change}
    \sum_{i} \overline{\ket{\psi_i}} \otimes \ket{\psi_i} = \sum_{j} \overline{\ket{\phi_j}} \otimes \ket{\phi_j}
\end{align}
for any orthonormal bases $\{\ket{\psi_i}\}_i$ and $\{\ket{\phi_j}\}_j$ of a Hilbert space in the last line, and the overline represents the complex conjugate in the computational basis.
Note that the superscripts of $\mathrm{CG}_{\overline{\rho}_i}$ in the circuit~\eqref{eq:generalied_dual_Schur} are omitted for simplicity.

The generalized CG transforms includes CG transform ($\rho: g\in \U(d) \mapsto g \in \End(\CC^d)$) and dual CG transform ($\rho: g\in \U(d) \mapsto \overline{g} \in \End(\CC^d)$) as special cases, where $\overline{g}$ represents the complex conjugate of $g$.
The generalized Schur transform includes the quantum Schur transform and the mixed Schur transform as special cases.

\subsection{Proof of Theorem~\ref{thm:comb_implementation}}
\label{sec:proof_comb_implementation}
\begin{proof}
Using the operators $C_i^{\lambda_i\mu_i}$ defined in Lem.~\ref{lem:comb_condition_symmetry}, we define an isometry operator
\begin{align}
    V_i^{\lambda_i \mu_{i-1}}: &\bigoplus_{\lambda_{i-1}\in \Irr{G}^{(i-1)}} \Supp(C_{i-1}^{\lambda_{i-1}\mu_{i-1}}) \otimes \CC^{c_{\lambda_{i-1}\rho_i}^{\lambda_i}} \to \bigoplus_{\mu_i\in \Irr{H}^{(i)}} \Supp(C_i^{\lambda_i \mu_i}) \otimes \CC^{c_{\mu_{i-1} \sigma_i}^{\mu_i}}
\end{align}
by
\begin{multline}
    V_i^{\lambda_i \mu_{i-1}} \ket{\mathrm{mem}_{i-1}} \otimes \ket{\lambda_{i-1}} \otimes \ket{a_i}
     \defeq \sqrt{d_{\lambda_{i-1}} \over d_{\lambda_i}} \sum_{\substack{\mu_i\in \Irr{H}^{(i)}\\b_i\in [c_{\mu_{i-1}\sigma_i}^{\mu_i}]}} \ket{b_i} \otimes \ket{\mu_i} \otimes (C_i^{\lambda_i \mu_i})^{1/2} \\
    \times\left[(C_{i-1}^{\lambda_{i-1}\mu_{i-1}})^{-1/2} \ket{\mathrm{mem}_{i-1}} \otimes \ket{a_i} \otimes \ket{b_i}\right],
\end{multline}
where $\ket{\mathrm{mem}_{i-1}}\in \Supp(C_{i-1}^{\lambda_{i-1}\mu_{i-1}})$ and $a_i\in [c_{\lambda_{i-1}\rho_i}^{\lambda_i}]$.
The isometric property can be checked as follows:
\begin{widetext}
\begin{align}
    &\bra{\mathrm{mem}'_{i-1}} \otimes \bra{\lambda'_{i-1}} \otimes \bra{a'_i}V_i^{\lambda_i \mu_{i-1} \dagger}V_i^{\lambda_i \mu_{i-1}}\ket{\mathrm{mem}_{i-1}} \otimes \ket{\lambda_{i-1}} \otimes \ket{a_i}\nonumber\\
    &= \sum_{\substack{\mu_i\in \Irr{H}^{(i)}\\b_i\in [c_{\mu_{i-1}\sigma_i}^{\mu_i}]}} \left[\bra{\mathrm{mem}'_{i-1}} (C_{i-1}^{\lambda'_{i-1}\mu_{i-1}})^{-1/2} \otimes \bra{a'_i} \otimes \bra{b_i}\right] {d_{\lambda_{i-1}} \over d_{\lambda_i}} C_i^{\lambda_i \mu_i}\left[(C_{i-1}^{\lambda_{i-1}\mu_{i-1}})^{-1/2} \ket{\mathrm{mem}_{i-1}} \otimes \ket{a_i} \otimes \ket{b_i}\right]\\
    &= \sum_{\substack{\lambda''_{i-1}\in \Irr{G}^{(i-1)}\\a''_i\in [c_{\lambda''_{i-1}\rho_i}^{\lambda_i}]}} \left[\bra{\mathrm{mem}'_{i-1}} (C_{i-1}^{\lambda'_{i-1}\mu_{i-1}})^{-1/2} \otimes \bra{a'_i}\right] C_{i-1}^{\lambda''_{i-1} \mu_{i-1}}\otimes \ketbra{a''_i}{a''_i} \left[(C_{i-1}^{\lambda_{i-1}\mu_{i-1}})^{-1/2} \ket{\mathrm{mem}_{i-1}} \otimes \ket{a_i}\right]\\
    &= \braket{\mathrm{mem}'_{i-1}}{\mathrm{mem}_{i-1}} \delta_{\lambda_{i-1} \lambda'_{i-1}} \delta_{a_i a'_i},
\end{align}
\end{widetext}
where we use Lem.~\ref{lem:comb_condition_symmetry}.
We define the Hilbert space $\mcM_i$ such that $\Supp(C_i^{\lambda_i \mu_i})\subset \mcM_i$ for all $\lambda_i\in \Irr{G}^{(i)}$ and $\mu_i\in\Irr{H}^{(i)}$, which describes the register to store the state $\ket{\mathrm{mem}_i}$.
We define the Hilbert space $\mcA_i$ for $i\in \{0, \ldots, n\}$ by
\begin{align}
    \mcA_i\eqdef \mcV_{i}^{H} \otimes \mcR_{i}^{H} \otimes \mcM_i \otimes \mcR_i^G \otimes \mcV_i^G,
\end{align}
where $\mcR_i^G$ stores the label of irreps $\lambda_i\in \Irr{G}^{(i)}$, $\mcV_i^G$ stores the vectors $\ket{\psi_{\lambda_i}}\in V_{\lambda_i}$ for irreps $\lambda_i\in \Irr{G}^{(i)}$, and $\mcV_{i}^{H} $ and $\mcR_{i}^{H}$ are similarly defined for the group $H$.
We define the isometry $W_i: \mcA_{i-1}\otimes \mcI_i \to \mcA_i \otimes \mcO_i$ by the quantum circuit shown in Fig.~\ref{fig:comb_implementation}~(b).
Using the isometry $W_i$, the quantum comb is implemented as shown in Fig.~\ref{fig:comb_implementation}~(a).
We can check that the circuit shown in Fig.~\ref{fig:comb_implementation}~(a) and (b) implements the quantum comb~\eqref{eq:decomposition_C} as follows.
The Choi matrix $C'$ of the quantum comb given by the circuit shown in Fig.~\ref{fig:comb_implementation}~(a) and (b) is given by
\begin{align}
    C' = \Tr_{\mcA_{n}} [\dketbra{W^{(n)}}_{\mcI^n \mcO^n \mcA_n}],
\end{align}
where $\dket{W^{(n)}}$ is defined by
\begin{align}
    \dket{W^{(n)}}\defeq \sum_{\vec{x}} \ket{\vec{x}}_{\mcI^n} \otimes (W^{(n)}\ket{\vec{x}})_{\mcO^n \mcA_n}
\end{align}
using the computational basis $\{\ket{\vec{x}}\}$ of $\mcI^n$, and $W^{(n)}$ is given by
\begin{align}
\begin{split}
    \includegraphics{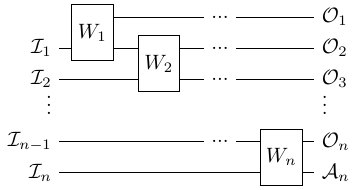}
\end{split}.
\end{align}
Since $V_i^{\lambda_i \mu_{i-1}}$ and $\mathrm{CG}_{\rho_j}$ for $j>i$ and $V_i^{\lambda_i \mu_{i-1}}$ and $\mathrm{CG}_{\overline{\sigma}_j}^\dagger$ for $j<i$ commute, $W$ is given by
\begin{widetext}
\begin{align}
    \begin{split}
        \includegraphics[width=\linewidth]{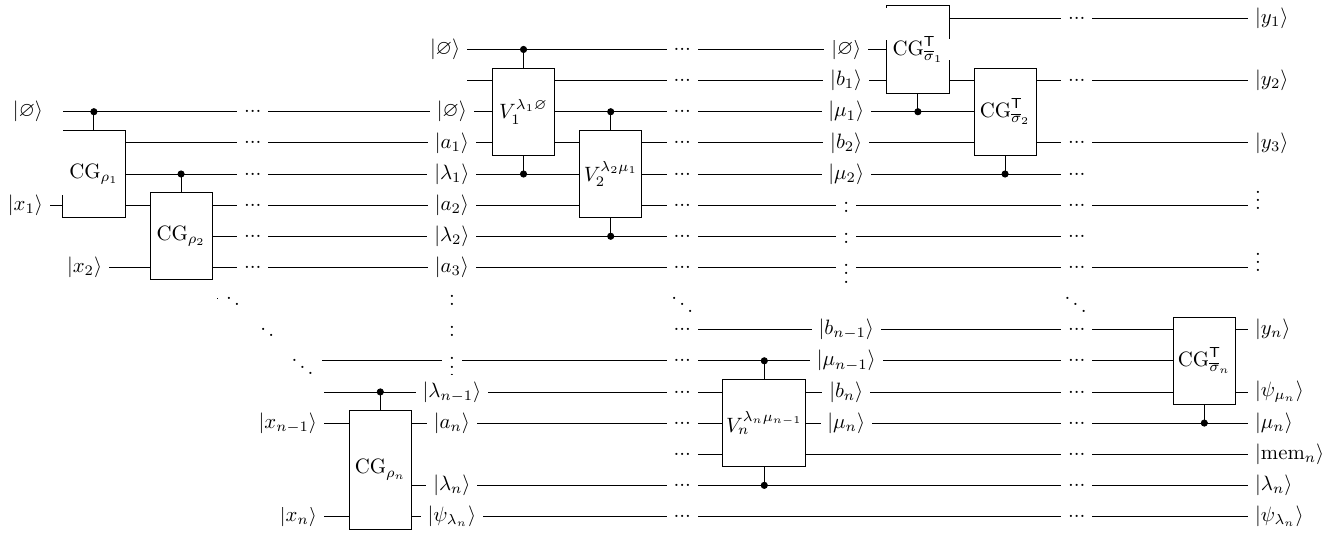}
    \end{split},
\end{align}
i.e.,
\begin{align}
    &W^{(n)} \left(U_\mathrm{Sch}^{\rho_1, \ldots, \rho_n}\right)^\dagger(\ket{\lambda, p_{\lambda}, \psi_{\lambda}^{(i)}})\nonumber\\
    &= \sum_{\mu\in \Irr{H}^{(n)}} \sum_{q_{\mu}\in \Paths(\mu, \scB_R)} U_\mathrm{dSch}^{\sigma_1, \ldots, \sigma_n} \ket{q_\mu} \otimes {1\over \sqrt{d_\lambda}} \left(C^{\lambda \mu}\right)^{1/2} (\ket{p_{\lambda}} \otimes \ket{q_{\mu}}) \otimes \ket{\psi_\lambda^{(i)}}\\
    &= \sum_{\mu\in \Irr{H}^{(n)}} \sum_{q_{\mu}\in \Paths(\mu, \scB_R)} {1\over \sqrt{d_\lambda d_\mu}} \sum_{j\in [d_\mu]} \left(U_\mathrm{Sch}^{\sigma_1, \ldots, \sigma_n}\right)^\dagger \ket{\mu, q_\mu, \psi_\mu^{(j)}} \otimes \ket{\psi_\mu^{(j)}} \otimes \left(C^{\lambda \mu}\right)^{1/2} (\ket{p_{\lambda}} \otimes \ket{q_{\mu}}) \otimes \ket{\psi_\lambda^{(i)}},
\end{align}
where we use Eq.~\eqref{eq:dual_schur_action} in the last line.
Thus, the vector $\dket{W^{(n)}}$ is given by
\begin{align}
    \dket{W^{(n)}}
    &= \sum_{\vec{x}} \ket{\vec{x}} \otimes W^{(n)}\ket{\vec{x}}\\
    &= \sum_{\lambda \in \Irr{G}^{(n)}} \sum_{p_{\lambda} \in \Paths(\lambda, \scB_L)}\sum_{i\in [d_\lambda]} \overline{\left(U_\mathrm{Sch}^{\rho_1, \ldots, \rho_n}\right)^\dagger \ket{\lambda, p_{\lambda}, \psi_{\lambda}^{(i)}}} \otimes W^{(n)} \left(U_\mathrm{Sch}^{\rho_1, \ldots, \rho_n}\right)^\dagger \ket{\lambda, p_{\lambda}, \psi_{\lambda}^{(i)}},
\end{align}
where we use Eq.~\eqref{eq:maximally_entangled_basis_change} in the last line.
Therefore, the Choi matrix is given by
\begin{align}
    C'
    &= \sum_{\substack{\lambda \in \Irr{G}^{(n)}\\\mu\in \Irr{H}^{(n)}}} \sum_{\substack{p_{\lambda}, p'_\lambda \in \Paths(\lambda, \scB_L)\\q_{\mu}. q'_\mu\in \Paths(\mu, \scB_R)}}\sum_{\substack{i, i'\in [d_\lambda]\\j, j'\in [d_\mu]}} {\Tr\left[\ketbra{\psi_\lambda^{(i)}}{\psi_\lambda^{(i')}} \otimes \ketbra{\psi_\mu^{(j)}}{\psi_\mu^{(j')}} \otimes \left(C^{\lambda \mu}\right)^{1/2} (\ketbra{p_{\lambda}}{p'_\lambda} \otimes \ketbra{q_{\mu}}{q'_\mu}) \left(C^{\lambda \mu}\right)^{1/2}\right] \over d_\lambda d_\mu}\nonumber\\
    &\hspace{30pt} \times \overline{\left(U_\mathrm{Sch}^{\rho_1, \ldots, \rho_n}\right)^\dagger \ketbra{\lambda, p_\lambda, \psi_\lambda^{(i)}}{\lambda, p'_\lambda, \psi_\lambda^{(i')}} \left(U_\mathrm{Sch}^{\rho_1, \ldots, \rho_n}\right)} \otimes \left(U_\mathrm{Sch}^{\sigma_1, \ldots, \sigma_n}\right)^\dagger\ketbra{\mu, q_\mu, \psi_\mu^{(j)}}{\mu, q'_\mu, \psi_\mu^{(j)}} \left(U_\mathrm{Sch}^{\sigma_1, \ldots, \sigma_n}\right)\\
    &= \sum_{\substack{\lambda \in \Irr{G}^{(n)}\\\mu\in \Irr{H}^{(n)}}} \sum_{\substack{p_{\lambda}, p'_\lambda \in \Paths(\lambda, \scB_L)\\q_{\mu}. q'_\mu\in \Paths(\mu, \scB_R)}} C^{\lambda\mu}_{p_\lambda q_\mu, p'_\lambda q'_\mu} {\overline{E}^{\lambda}_{p_\lambda p'_\lambda} \over d_\lambda} \otimes {\widetilde{E}^{\mu}_{q_\mu q'_\mu}\over d_\mu}\\
    &= C.
\end{align}
\end{widetext}
\end{proof}

\subsection{Proof of Theorem~\ref{thm:comb_implementation_covariant}}
\label{sec:proof_comb_implementation_covariant}
\begin{proof}
Using the operators $C_i^{\lambda_{2i}}$ defined in Lem.~\ref{lem:comb_condition_covariance}, we define an isometry operator
\begin{align}
    V_i^{\lambda_{2i-1}}: &\bigoplus_{\lambda_{2i-2}\in \Irr{G}^{(2i-2)}} \Supp(C_{i-1}^{\lambda_{2i-2}}) \otimes \CC^{c_{\lambda_{2i-2}\rho_{2i-1}}^{\lambda_{2i-1}}} \to \bigoplus_{\lambda_{2i}\in \Irr{G}^{(2i)}} \Supp(C_i^{\lambda_{2i}}) \otimes \CC^{c_{\lambda_{2i-1} \rho_{2i}}^{\lambda_{2i}}}
\end{align}
by
\begin{multline}
    V_i^{\lambda_{2i-1}}\ket{\mathrm{mem}_{i-1}} \otimes\ket{\lambda_{2i-2}} \otimes \ket{a_{2i-1}} \defeq \\ \sqrt{d_{\lambda_{2i-2}} \over d_{\lambda_{2i-1}}} \sum_{\substack{\lambda_{2i}\in \Irr{G}^{(2i)}\\a_{2i}\in [c_{\lambda_{2i-1}\rho_{2i}}^{\lambda_{2i}}]}} (C_i^{\lambda_{2i}})^{1/2} \Big[(C_{i-1}^{\lambda_{2i-2}})^{-1/2}
    \times \ket{\mathrm{mem}_{i-1}}\otimes \ket{a_{2i-1}} \otimes \ket{a_{2i}}\Big]
    \otimes \ket{\lambda_{2i}} \otimes \ket{a_{2i}},
\end{multline}
where $\ket{\mathrm{mem}_{i-1}}\in \Supp(C_{i-1}^{\lambda_{2i-2}})$ and $a_{2i-1}\in [c_{\lambda_{2i-2}\rho_{2i-1}}^{\lambda_{2i-1}}]$.
The isometric property can be checked similarly to the proof of Thm.~\ref{thm:comb_implementation}.
We define the Hilbert space $\mcM_i$ such that $\Supp(C_i^{\lambda_{2i}})\subset \mcM_i$ for all $\lambda_{2i}\in \Irr{G}^{(2i)}$, which describes the register to store the state $\ket{\mathrm{mem}_i}$.
We define the Hilbert space $\mcA_i$ for $i\in \{0, \ldots, n\}$ by
\begin{align}
    \mcA_i\defeq \mcM_i \otimes \mcR_{2i}^{G} \otimes \mcV_{2i}^{G},
\end{align}
where $\mcR_{2i}^G$ stores the label of irreps $\lambda_{2i}\in \Irr{G}^{(2i)}$ and $\mcV_{2i}^G$ stores the vectors $\ket{\psi_{\lambda_{2i}}}\in V_{\lambda_{2i}}$ for irreps $\lambda_{2i}\in \Irr{G}^{(2i)}$.
We define the isometry $W_i: \mcA_{i-1}\otimes \mcI_{i}  \to \mcA_i \otimes \mcO_{i}$ by the quantum circuit shown in Fig.~\ref{fig:comb_implementation}~(c).

We can check that the circuit shown in Fig.~\ref{fig:comb_implementation}~(a) and (c) implements the quantum comb~\eqref{eq:decomposition_C_G} by showing the following lemma.
\begin{lemma}
    \label{lem:dket_W}
    Defining $W^{(n)}$ and $\dket{W^{(n)}}$ as in the proof of Thm.~\ref{thm:comb_implementation}, the vector $\dket{W^{(n)}}$ is given by
    \begin{align}
        \dket{W^{(n)}} = \sum_{\substack{\lambda\in \Irr{G}^{(2n)}\\p_{\lambda} \in \Paths(\lambda, \scB)\\i\in [d_\lambda]}} & \left[\overline{\left(U_\mathrm{Sch}^{\rho_1, \ldots, \rho_{2n}}\right)^\dagger(\ket{\lambda, p_{\lambda}, \psi_{\lambda}^{(i)}})}\right]_{\mcI^n\mcO^n}
        \otimes {1\over \sqrt{d_\lambda}}\left[(C^{\lambda})^{1/2} \ket{p_{\lambda}} \otimes \ket{\psi_\lambda^{(i)}}\right]_{\mcA_n}.
    \end{align}
\end{lemma}
\begin{proof}[Proof of Lem.~\ref{lem:dket_W}]
    We can show this lemma by induction.
    For $n=0$, we have $\dket{W^{(0)}} = 1$ and the lemma holds.
    Suppose the lemma holds for $n-1$.
    Then, we have
    \begin{widetext}
    \begin{align}
        \dket{W^{(n)}} &= \sum_{x_n} \ket{x_n}_{\mcI_n} \otimes (\1_{\mcI^{n-1}\mcO^{n-1}} \otimes W_n) (\dket{W^{(n-1)}}_{\mcI^{n-1}\mcO^{n-1}\mcA_{n-1}} \otimes \ket{x_n})\\
        &= \sum_{\substack{\lambda_{2n-2}\in \Irr{G}^{(2n-2)}\\p_{\lambda_{2n-2}}\in \Paths(\lambda_{2n-2}, \scB)}} \sum_{i\in [d_{\lambda_{2n-2}}]} \sum_{x_n} {1\over \sqrt{d_{\lambda_{2n-2}}}} \ket{x_n}_{\mcI_n} \otimes (\1_{\mcI^{n-1}\mcO^{n-1}} \otimes W_n) \nonumber\\
        &\hspace{15pt} \times \left[\overline{\left(U_\mathrm{Sch}^{\rho_1, \ldots, \rho_{2n-2}}\right)^\dagger \ket{\lambda_{2n-2}, p_{\lambda_{2n-2}}, \psi_{\lambda_{2n-2}}^{(i)}}}\right]_{\mcI^{n-1}\mcO^{n-1}}\nonumber\\ 
        &\hspace{15pt} \otimes \left[(C_{n-1}^{\lambda_{2n-2}})^{1/2} \ket{p_{\lambda_{2n-2}}} \otimes \ket{\lambda_{2n-2}} \otimes \ket{\psi_{\lambda_{2n-2}}^{(i)}}\right]_{\mcA_{n-1}} \otimes \ket{x_n}\\
        &= \sum_{\substack{\lambda_{2n-2}\in \Irr{G}^{(2n-2)}\\p_{\lambda_{2n-2}}\in \Paths(\lambda_{2n-2}, \scB)}} \sum_{i\in [d_{\lambda_{2n-1}}]} \sum_{\lambda_{2n-1} \in \Irr{G}^{(2n-1)}} \sum_{a_{2n-1}\in [c_{\lambda_{2n-2}\rho_{2n-1}}^{\lambda_{2n-1}}]} \nonumber\\
        &\hspace{30pt} \times\left[\overline{\left(U_\mathrm{Sch}^{\rho_1, \ldots, \rho_{2n-1}}\right)^\dagger\ket{\lambda_{2n-1}, p_{\lambda_{2n-2}} \xrightarrow{a_{2n-1}} \lambda_{2n-1}, \psi_{\lambda_{2n-1}}^{(i)}}}\right]_{\mcI^{n}\mcO^{n-1}}\nonumber\\
        &\hspace{30pt} \otimes \left[\1_{\mcM_n} \otimes {\mathrm{CG}}_{\lambda_{2n-1}, \overline{\rho}_{2n}}^\dagger\right]V_i^{\lambda_{2n-1}} {1\over \sqrt{d_{\lambda_{2n-2}}}} \left[(C_{n-1}^{\lambda_{2n-2}})^{1/2} \ket{p_{\lambda_{2n-2}}} \otimes \ket{a_{2n-1}}\right] \otimes \ket{\psi_{\lambda_{2n-1}}^{(i)}}\\
        &= \sum_{\substack{\lambda_{2n-2}\in \Irr{G}^{(2n-2)}\\p_{\lambda_{2n-2}}\in \Paths(\lambda_{2n-2}, \scB)}} \sum_{i\in [d_{\lambda_{2n-1}}]} \sum_{\lambda_{2n-1} \in \Irr{G}^{(2n-1)}} \sum_{a_{2n-1}\in [c_{\lambda_{2n-2}\rho_{2n-1}}^{\lambda_{2n-1}}]} \sum_{\lambda_{2n}\in \Irr{G}^{(2n)}} \sum_{a_{2n}\in [c_{\lambda_{2n-1}\rho_{2n}}^{\lambda_{2n}}]} \nonumber\\
        &\hspace{30pt} \times\left[\overline{\left(U_\mathrm{Sch}^{\rho_1, \ldots, \rho_{2n-1}}\right)^\dagger\ket{\lambda_{2n-1}, p_{\lambda_{2n-2}} \xrightarrow{a_{2n-1}} \lambda_{2n-1}, \psi_{\lambda_{2n-1}}^{(i)}}}\right]_{\mcI^{n}\mcO^{n-1}}\nonumber\\
        &\hspace{30pt} \otimes \left[(C^{\lambda_{2n}})^{1/2} \ket{p_{\lambda_{2n-2}} \xrightarrow{a_{2n-1}} \lambda_{2n-1} \xrightarrow{a_{2n}} \lambda_{2n}}\right]_{\mcM_n} \otimes {1 \over \sqrt{d_{\lambda_{2n-1}}}} \left[\mathrm{CG}_{\lambda_{2n}, \overline{\rho}_{2n}}^\dagger (\ket{\psi_{\lambda_{2n-1}}^{(i)}} \otimes \ket{a_{2n}})\right]_{\mcV_{2n}^{G} \mcO_n}\\
        &= \sum_{\substack{\lambda_{2n-1}\in \Irr{G}^{(2n-1)}\\p_{\lambda_{2n-1}}\in \Paths(\lambda_{2n-1}, \scB)}} \sum_{i\in [d_{\lambda_{2n-1}}]}  \sum_{\lambda_{2n}\in \Irr{G}^{(2n)}} \sum_{a_{2n}\in [c_{\lambda_{2n-1}\rho_{2n}}^{\lambda_{2n}}]} \sum_{j\in [d_{\lambda_{2n}}]} \sum_{y_n}\nonumber\\
        &\hspace{30pt} \times \left[\overline{\left(U_\mathrm{Sch}^{\rho_1, \ldots, \rho_{2n-1}}\right)^\dagger\ket{\lambda_{2n-1}, p_{\lambda_{2n-1}}, \psi_{\lambda_{2n-1}}^{(i)}}}\right]_{\mcI^n\mcO^{n-1}} \otimes \left[(C^{\lambda_{2n}})^{1/2} \ket{p_{\lambda_{2n-1}} \xrightarrow{a_{2n}} \lambda_{2n}}\right]_{\mcM_n}\nonumber\\
        &\hspace{30pt} \otimes {1\over \sqrt{d_{\lambda_{2n}}}} \ket{y_n}_{\mcO_n} \otimes \left[\bra{a_{2n}} \mathrm{CG}_{\lambda_{2n-1}, \rho_{2n}} \ket{\psi_{\lambda_{2n-1}}^{(i)}} \ket{y_n}\right]_{\mcV_{2n}^G}\\
        &= \sum_{\substack{\lambda_{2n}\in \Irr{G}^{(2n)}\\p_{\lambda_{2n}}\in \Paths(\lambda_{2n}, \scB)}} \sum_{j\in [d_{\lambda_{2n}}]} \left[\overline{\left(U_\mathrm{Sch}^{\rho_1, \ldots, \rho_{2n}}\right)^\dagger\ket{\lambda_{2n}, p_{\lambda_{2n}}, \psi_{\lambda_{2n}}^{(j)}}}\right]_{\mcI^n\mcO^n} \otimes {1\over \sqrt{d_{\lambda_{2n}}}}\left[(C^{\lambda_{2n}})^{1/2} \ket{p_{\lambda_{2n}}} \otimes \ket{\psi_{\lambda_{2n}}^{(j)}}_{\mcV_{2n}^G}\right]_{\mcA_n},
    \end{align}
    \end{widetext}
    where we use the equality~\eqref{eq:maximally_entangled_basis_change} in the third and last equalities, and Eq.~\eqref{eq:CG_conjugate} in the fifth equality.
    Thus, the lemma holds for $n$, which concludes the proof.
\end{proof}
From Lem.~\ref{lem:dket_W}, the Choi matrix $C'$ of the quantum comb given by the circuit shown in Fig.~\ref{fig:comb_implementation}~(a) and (c) is given by
\begin{align}
    C' &= \Tr_{\mcA_{n}} [\dketbra{W^{(n)}}_{\mcI^n \mcO^n \mcA_n}]\\
    &= \sum_{\lambda\in \Irr{G}^{(2n)}} \sum_{p_{\lambda}, p'_\lambda \in \Paths(\lambda, \scB)} C^{\lambda}_{p_\lambda, p'_\lambda} {\overline{E}^{\lambda}_{p_\lambda p'_\lambda} \over d_\lambda} \\
    &= C,
\end{align}
which concludes the proof.
\end{proof}

\section{SDP of deterministic quantum combs for unitary transposition and inversion}

Following the work of \cite{quintino2022deterministic} and others \cite{quintino2019reversing,quintino2019probabilistic,yoshida2021universal,yoshida2023reversing,ComplexConjugation}, we consider a task of universal transformation of a black-box unitary operation. 
Consider the following general problem: given $n$ copies of an unknown $d$-dimensional unitary $U$, the task is to find a universal protocol that implements $f(U)$, where $f$ is some function of $U$. 
This protocol can be either \emph{deterministic} or \emph{probabilistic}, depending on whether it always succeeds or not, and either \emph{exact} or \emph{non-exact}, depending on the channel fidelity between the ideal channel and the one implemented by the protocol. 
More specifically, we consider the problem of finding optimal deterministic protocols for $f(U)=U\tp$ and $f(U) = U^{-1}$.

Following previous work, we use the formalism of quantum combs or sequential superchannels \cite{quintino2022deterministic}. 
A \emph{quantum superchannel} is a linear map $\mathcal{C}\colon \bigotimes_{i=1}^{n} \of[\big]{\End(\I_i) \rightarrow \End(\O_i)} \rightarrow \of[\big]{\End(\P) \rightarrow \End(\F)}$\footnote{Note that the notation for $\I$ and $\O$ used in the main text and Appendix~\ref{sec:comb_tensor_representation} of this manuscript is flipped, and in this specific section we use more standard notation from \cite{quintino2022deterministic} for past, future, input and output of a quantum comb.
The notation here becomes consistent with the main text and Appendix~\ref{sec:comb_tensor_representation} by transforming $n+1\to n$, $\mcI_i \to \mcO_{i}$, $\mcO_i \to \mcI_{i+1}$, $\mcP\to \mcI_1$, and $\mcF\to \mcO_n$.}
that transforms $n$ quantum channels into a new quantum channel.
Here the spaces $\I_i = \O_i = \P = \F = \C^d$ correspond to the inputs $\I_i$ and outputs $\O_i$ of the $i$-th copy of the channel $\mathcal{U}(\rho) \defeq U \rho U\ct$ associated with the unknown input unitary $U$, and $\P$ and $\F$ are the input and output spaces of the desired output channel $\mathcal{U}_f(\rho) := f(U) \rho f(U)\ct$ that represents the target unitary $f(U)$. 
Let $\I^n \defeq \bigotimes_{i=1}^n \I_i$ and $\O^n \defeq \bigotimes_{i=1}^n \O_i$.
A \emph{quantum sequential superchannel} $\mathcal{C}$ (also known as a \emph{quantum comb}) is a quantum superchannel with certain additional causal constraints on its Choi matrix $C \in \End(\P \otimes \I^n \otimes \O^n \otimes \F)$ given by Eq.~\eqref{eq:comb_condition}.

In the following, it is important to understand representation theory of partially transposed permutation matrix algebras $\A^d_{n,m}$ for $m \in \set{0,1}$, since symmetry reduction of our SDPs will be done using irreps of this algebra.
This algebra appears in the mixed Schur--Weyl duality and has been extensively studied in Refs.~\cite{studzinski2013commutant,mozrzymas2014structure,mozrzymas2018simplified,grinko2024linear,grinko2023gelfand,grinko2025mixed,studzinski2025group,horodecki2026iterative}.
A special case $\A^d_{n,0}$ of this algebra corresponds to the matrix symmetric group algebra.
The representation theory of $\A^d_{n,1}$ is governed to a large extent by a Bratteli diagram, which encodes specific choices of bases of irreducible representations.
Different levels of the Bratteli diagram correspond to irreducible representations of the corresponding subalgebra, while arrows indicate how representations restrict.
In the following, we consider two different bases: left (L) and right (R). 
Their Bratteli diagrams $\scB_L$ and $\scB_R$ are shown in \cref{fig:brat_r,fig:brat_l}. In both tasks, we successfully reproduce the known results from Refs.~\cite {quintino2022deterministic,yoshida2023reversing,grinko2024linear}, while obtaining a range of new values.

\begin{figure}[!ht]
    \centering
    \begin{minipage}[t]{0.475\textwidth}
        \centering
        \includegraphics[width=\textwidth, page=2]{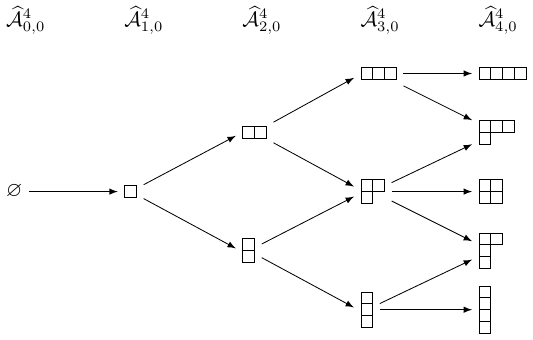}
        \caption{Example of a Bratteli diagram $\scB_L$ for $\A^3_{n,1}$ adapted to $\A^d_{0,0} \hookrightarrow \A^d_{1,0} \hookrightarrow \A^d_{2,0} \hookrightarrow \A^d_{3,0} \hookrightarrow \A^d_{3,1}$.
        Matrix units $E^{\lambda}_{T,T'}$ from \cref{6:def:Omega,eq:app4_ansatz_C} are adapted to this chain.}
        \label{fig:brat_l}
    \end{minipage}
    \hfill
    \begin{minipage}[t]{0.475\textwidth}
        \centering
        \includegraphics[width=\textwidth, page=3]{figs/bratteli_diagrams_2.pdf}
        \caption{Example of a Bratteli diagram $\scB_R$ for $\A^3_{n,1}$ adapted to $\A^d_{0,0} \hookrightarrow \A^d_{0,1} \hookrightarrow \A^d_{1,1} \hookrightarrow \A^d_{2,1} \hookrightarrow \A^d_{3,1}$.
        Matrix units $\widetilde{E}^{\mu}_{Q,Q'}$ from \cref{eq:app4_ansatz_C} are adapted to this chain.}
        \label{fig:brat_r}
    \end{minipage}
\end{figure}

\subsection{Unitary transposition}
Finding a deterministic sequential superchannel $\mathcal{C}$ which implements the operation $\mathcal{C} ( \mathcal{U}\xp{n}) = \mathcal{U}\tp$ with highest possible average channel fidelity is equivalent to solving the following SDP for the Choi matrix $C$ of $\mathcal{C}$ \cite{quintino2022deterministic}:
\begin{equation}
  \begin{aligned}
    \max_{C} \quad & \Tr \of{C \Omega_{n,d}} \\
    \textrm{s.t.} \quad & C \text{ satisfies Eq.~\eqref{eq:comb_condition}},
  \end{aligned}
  \label{eq:black-box unitary SDP}
\end{equation}
where $\Omega_{n,d}$ is given by
\begin{equation}
    \label{6:def:Omega}
    \Omega_{n,d} \defeq \frac{1}{d^2} \sum_{\lambda \in \Irr{\A^d_{n,1}}} \sum_{ T,T' \in \Paths(\lambda,\scB_L)} \frac{ \of{E^{\lambda}_{T,T'}}_{\I^n \F} \otimes \of{E^{\lambda}_{T,T'}}_{\O^n \P}}{d_\lambda},
\end{equation}
where $E^{\lambda}_{T,T'}$ are matrix units for the Gelfand--Tsetlin basis of $\A^d_{n,1}$, adapted to the sequence $\A^d_{0,0} \hookrightarrow \A^d_{1,0} \hookrightarrow \dotsc \hookrightarrow \A^d_{n,0} \hookrightarrow \A^d_{n,1}$, and $\scB_L$ is the Bratteli diagram corresponding to this sequence of algebras \cite{grinko2025quantum}, see \cref{fig:brat_l}.
Notice that $\Omega_{n,d}$ has the mixed unitary symmetry:
\begin{equation}
    \sof[\big]{\Omega_{n,d}, V\xp{n}_{\I^n} \x \bar{V}_{\F} \x U\xp{n}_{\O^n} \x \bar{U}_{\P} } = 0, \qquad \forall U, V \in \U_d.
    \label{eq:app4_symmetry_Omega}
\end{equation}
Therefore without loss of generality the optimal solution of the SDP (\ref{eq:black-box unitary SDP}) also has the same symmetry:
\begin{equation}
    \sof[\big]{C,  V\xp{n}_{\I^n} \x \bar{V}_{\F} \x U\xp{n}_{\O^n} \x \bar{U}_{\P}} = 0, \qquad \forall U, V \in \U_d,
    \label{eq:app4_symmetry_C}
\end{equation}
which allows us to use the following ansatz for $C$:
\begin{equation}
    C = \sum_{\lambda,\mu \in \Irr{\A^d_{n,1}}} \sum_{S,S' \in \Paths(\lambda,\scB_L)} \sum_{Q,Q' \in \Paths(\mu,\scB_R)} c^{\lambda \mu}_{SS'QQ'} \frac{\of{E^{\lambda}_{S,S'}}_{\I^n \F}}{d_\lambda} \otimes \frac{\of{\widetilde{E}^{\mu}_{Q,Q'}}_{ \P \O^n}}{d_\mu},
    \label{eq:app4_ansatz_C}
\end{equation}
where $\widetilde{E}^{\lambda}_{Q,Q'}$ are matrix units for the Gelfand--Tsetlin basis of $\A^d_{n,1}$, adapted to a different sequence $\A^d_{0,0} \hookrightarrow \A^d_{0,1} \hookrightarrow \A^d_{1,1} \hookrightarrow \dotsc \hookrightarrow \A^d_{n,1}$, and $\scB_R$ is the Bratteli diagram corresponding to it, , see \cref{fig:brat_r}.
The reason we choose a different Gelfand--Tsetlin basis on the systems $\P \O^n$ is that this choice is more suitable for simplification of the partial trace constraints in \cref{eq:comb_condition}.

Note that the semidefinite constraint in \eqref{eq:comb_condition} becomes
\begin{equation}
    C \succeq  0 \qquad \Leftrightarrow \qquad \sof[\big]{c^{\lambda \mu}_{SS'QQ'}}_{(SQ),(S'Q')} \succeq  0, \quad \forall \lambda, \mu \in \Irr{\A^d_{n,1}},
\end{equation}
where we think of $\sof[\big]{c^{\lambda \mu}_{SS'QQ'}}_{(SQ),(S'Q')} \in \End(\C^{d_\lambda} \otimes \C^{d_\mu})$ as matrices. 

Using \cref{6:def:Omega,eq:app4_ansatz_C} we can rewrite the objective function as
\begin{equation}
    \Tr \of{C \Omega_{n,d}} = \frac{1}{d^2} \sum_{\lambda,\mu \in \Irr{\A^d_{n,1}}} \sum_{\substack{T,T' \in \Paths(\lambda,\scB_L) \\ Q,Q' \in \Paths(\mu,\scB_R)}} \frac{c^{\lambda,\mu}_{T'TQQ'}}{d_\lambda d_\mu} \Tr \of{ \psi^d_{n+1}(\pi) E^{\lambda}_{TT'} \psi^d_{n+1}(\pi^{-1}) \widetilde{E}^{\mu}_{TT'}},
\end{equation}
where $\psi^d_{n+1}(\pi)$ is the tensor representation of the full cyclic permutation on $(\C^d)^{n+1}$. 
The coefficients $\Tr \of{ \psi^d_{n+1}(\pi) E^{\lambda}_{TT'} \psi^d_{n+1}(\pi^{-1}) \widetilde{E}^{\mu}_{TT'}}$ could be computed numerically using Clebsch--Gordan tensor network representation of matrix units \cite{grinko2023gelfand}.

Finally, we can rewrite non-signaling constraints of the SDP (\ref{eq:black-box unitary SDP}) in the Gelfand--Tsetlin basis by using \cref{lem:tensor_and_partial_trace}. 
First, we define the following parametrisation of $C_i$ operators:
\begin{align}
    C_i = \sum_{\substack{\lambda \in \scB_L^{(i)} \\ \mu \in \scB_R^{(i)}}} \sum_{\substack{S,S' \in \Paths_i(\lambda,\scB_L) \\ Q,Q' \in \Paths_i(\mu,\scB_R)}} c^{i, \lambda \mu}_{SS'QQ'} \frac{\of{E^{\lambda}_{S,S'}}_{\I^i}}{d_\lambda} \otimes \frac{\of{\widetilde{E}^{\mu}_{Q,Q'}}_{\P \O^{i-1}}}{d_\mu},
\end{align}
where $\scB^{(i)}$ denotes the set of irreps at the level $i$ of a Bratteli diagram $\scB$.
Then we get the following rewriting of the constraints (\ref{eq:black-box unitary SDP}):
\begin{align}
    c^{i-1, \lambda \rho}_{TT'RR'} &= \sum_{\nu : \lambda \to \nu} \sum_{\theta: \rho \to \theta} c^{i, \nu \theta}_{T\to\nu, T'\to\nu, R\to\theta, R'\to\theta} \\
    \sum_{\nu : \lambda \to \nu} c^{i, \nu \mu}_{T\to\nu, T'\to\nu, Q, Q'} &= \sum_{\omega : \omega \to \mu} \delta_{Q_{i-1},\omega} \delta_{Q'_{i-1},\omega} c^{i-1, \lambda \omega}_{TT'\bar{Q}\bar{Q'}} \frac{d_\mu}{d \cdot d_\omega} 
\end{align}
for every $i \in [n+1]$, $\lambda \in \scB_L^{(i-1)}$, $\mu \in \scB_R^{(i)}$, $\rho \in  \scB_R^{(i-1)}$, $T,T' \in \Paths_{i-1}(\lambda, \scB_L)$, $R,R' \in \Paths_{i-1}(\rho,\scB_R)$ and $Q,Q' \in \Paths_{i}(\mu,\scB_R)$.
Note that this set of equations is a special case of more general \cref{lem:comb_condition_symmetry}.

With this simplification of constraints we can solve SDP numerically in Julia. 
Our numerical results are summarised in \cref{table:sdp}.

\subsection{Unitary inversion}
For completeness, we comment on the SDP for inversion case, which was first derived in \cite{yoshida2023reversing}.
In that case, the SDP \eqref{eq:black-box unitary SDP} has the same form, except that the matrices $C$ and $\Omega_{n,d}$ possess a different symmetry: they commute with $U\xp{n+1}_{\O^n \P} \x V\xp{n+1}_{\I^n\F}$ for every $U, V \in \U_d$. 
This corresponds to the case $m=0$ in the formalism of mixed Schur--Weyl duality. 
The only difference with the transposition case is that now both sets of matrix units $\of{E^{\lambda}_{S,S'}}_{\I^n \F}$ and $\of{\widetilde{E}^{\mu}_{Q,Q'}}_{ \P \O^n}$ are coming from the matrix permutation algebra $\A^d_{n+1,0}$.
The Bratteli diagrams $\scB_L$ and $\scB_R$ for these matrix units are the same, and it is simply a Young lattice with paths corresponding to standard Young tableaux, \cref{fig:young_lattice}.
Finally, our numerical results for inversion case are also summarised in \cref{table:sdp}.

\begin{figure}[!ht]
    \centering
    \includegraphics[width=0.5\textwidth, page=1]{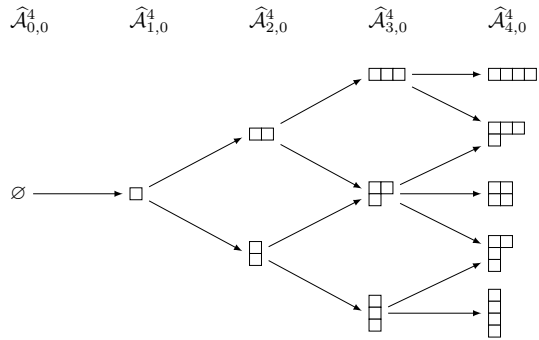}
    \caption{Bratteli diagram for symmetric group (Young lattice). Example for $\CS_4 \cong \A^4_{4,0}$.}
    \label{fig:young_lattice}
\end{figure}

\section{Reduction of the number of variables using the parametrized quantum combs with $\U(d)\times \U(d)$ symmetry}
\label{sec:variable_count}

The number of variables for our nonlinear optimization problem is summarized in \cref{table:nlopt_vars}. 
The size of our non-linear optimization problems is significantly lower than naive scaling based on the dimension of the input and output of isometries $V_i^{\lambda_i \mu_{i-1}}$ in \cref{fig:comb_implementation}.
Note that in our optimization we assume one dimensional additional memory register $\ket{\mathrm{mem}_i}$.
We compare the number of variables in our optimization with that in the naive approach to reproduce the same circuit, which is based on parametrizing each isometry $W_i$ in \cref{fig:comb_implementation} as a general isometry from $\bigoplus_{\lambda_{i-1}\in \Irr{G}^{(i-1)}, \mu_{i-1}\in\Irr{H}^{(i-1)}} V_{\lambda_{i-1}} \otimes V_{\mu_{i-1}} \otimes \CC^d$ to $\bigoplus_{\lambda_i\in \Irr{G}^{(i)}, \mu_i\in\Irr{H}^{(i)}} V_{\lambda_i} \otimes V_{\mu_i} \otimes \CC^d$ without any symmetry constraints.
The number of variables in the naive approach is given by $\sum_{i=1}^n \of[\big]{\sum_{\lambda_{i-1}, \mu_{i-1}} d_{\lambda_{i-1}} d_{\mu_{i-1}} d} \cdot \of[\big]{\sum_{\lambda_i, \mu_i} d_{\lambda_i} d_{\mu_i} d}$, where $d_\lambda$ and $d_\mu$ are dimensions of irreps $\lambda$ and $\mu$, respectively, which is summarized in \cref{table:nlopt_vars_2}.
We compare these two scalings in \cref{fig:number_of_variables}, where we can see that the number of variables in our optimization is significantly lower than that in the naive approach, especially for large $n$ and $d$.

\begin{figure}
    \centering
    \includegraphics[width=.5\linewidth]{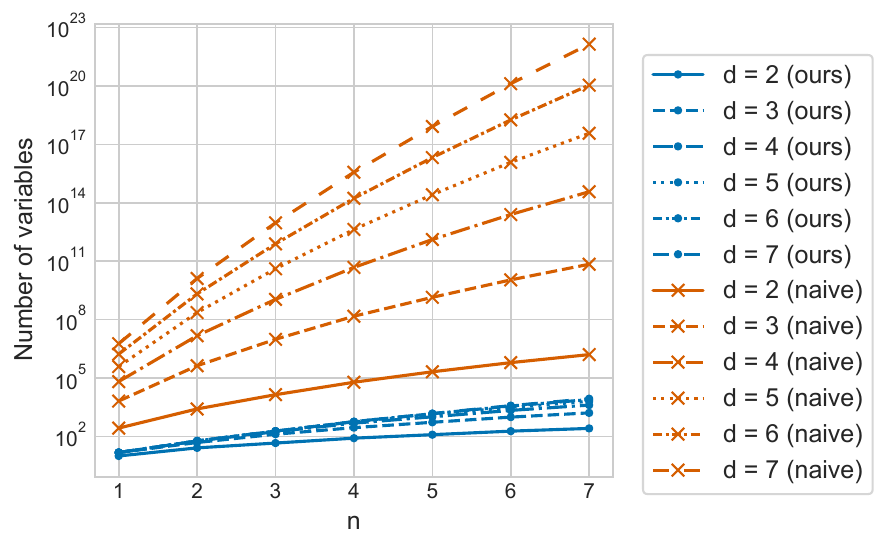}
    \caption{Comparison of the number of variables in the parametrized quantum comb for unitary transposition with $\U(d)\times \U(d)$ symmetry and that in the naive approach~\cite{mo2025parameterized} to reproduce the same circuit.
    The $x$-axis represents the query number $n$ and the $y$-axis represents the number of variables in log scale.
    The blue and orange lines represent the number of variables in the parametrized quantum comb with $\U(d)\times \U(d)$ symmetry and the naive approach, respectively.}
    \label{fig:number_of_variables}
\end{figure}

\begin{table}
    \centering
    \begin{minipage}[t]{0.3\textwidth}
        \centering
        \begin{tabular}{c|c||c|c|c|c|c|c|c}
            $f(U)$   &   \diagbox{$d$}{$n$} &   1  &  2    &  3    &   4   &   5   &  6   &  7   \\ \hline \hline
            \multirow{6}{*}{$U\tp$} &   2   &  10  &  26   &  46   &  82   &  124  & 188  & 260  \\
                                    &   3   &  15  &  49   &  131  &  284  &  536  & 987  & 1631 \\
                                    &   4   &  15  &  59   &  176  &  484  & 1026  & 2184 & 4088 \\
                                    &   5   &  15  &  59   &  189  &  556  & 1370  & 3111 & 6388 \\
                                    &   6   &  15  &  59   &  189  &  577  & 1478  & 3671 & 7869 \\
                                    &   7   &  15  &  59   &  189  &  577  & 1503  & 3830 & 8708 \\ \hline
            \multirow{6}{*}{$U\ct$} &   2   &  10  &  26   &  46   &  82   &  124  & 188  & 260  \\
                                    &   3   &  12  &  42   &  111  &  237  &  456  & 844  & 1404 \\
                                    &   4   &  12  &  46   &  135  &  363  &  781  & 1651 & 3098 \\
                                    &   5   &  12  &  46   &  142  &  403  &  982  & 2208 & 4527 \\
                                    &   6   &  12  &  46   &  142  &  415  & 1038  & 2530 & 5390 \\
                                    &   7   &  12  &  46   &  142  &  415  & 1057  & 2621 & 5873 \\
        \end{tabular}
        \caption{\textbf{Nonlinear optimization: number of variables}. Number of scalar variables in our nonlinear optimization using parametrized quantum combs with symmetries.}
        \label{table:nlopt_vars}
    \end{minipage}%
    \hfill
    \begin{minipage}[t]{0.6\textwidth}
        \centering
        \begin{tabular}{c|c||c|c|c|c|c|c|c}
            $f(U)$   &   \diagbox{$d$}{$n$} &   1  &  2    &  3    &   4   &   5   &  6   &  7   \\ \hline \hline
            \multirow{6}{*}{$U\tp$} &   2   &  $2.7\cdot 10^{2}$ & $2.6\cdot 10^{3}$ & $1.4\cdot 10^{4}$ & $6.1\cdot 10^{4}$ & $2.1\cdot 10^{5}$ & $6.2\cdot 10^{5}$ & $1.6\cdot 10^{6}$ \\
                                    &   3   &  $6.6\cdot 10^{3}$ & $4.3\cdot 10^{5}$ & $9.8\cdot 10^{6}$ & $1.5\cdot 10^{8}$ & $1.4\cdot 10^{9}$ & $1.1\cdot 10^{10}$ & $7.0\cdot 10^{10}$ \\
                                    &   4   &  $6.6\cdot 10^{4}$ & $1.5\cdot 10^{7}$ & $1.1\cdot 10^{9}$ & $4.8\cdot 10^{10}$ & $1.3\cdot 10^{12}$ & $2.5\cdot 10^{13}$ & $3.7\cdot 10^{14}$ \\
                                    &   5   &  $3.9\cdot 10^{5}$ & $2.3\cdot 10^{8}$ & $4.1\cdot 10^{10}$ & $4.3\cdot 10^{12}$ & $2.6\cdot 10^{14}$ & $1.2\cdot 10^{16}$ & $3.7\cdot 10^{17}$ \\
                                    &   6   &  $1.7\cdot 10^{6}$ & $2.1\cdot 10^{9}$ & $7.8\cdot 10^{11}$ & $1.7\cdot 10^{14}$ & $2.1\cdot 10^{16}$ & $1.8\cdot 10^{18}$ & $1.1\cdot 10^{20}$ \\
                                    &   7   &  $5.8\cdot 10^{6}$ & $1.3\cdot 10^{10}$ & $9.4\cdot 10^{12}$ & $3.7\cdot 10^{15}$ & $8.3\cdot 10^{17}$ & $1.3\cdot 10^{20}$ & $1.4\cdot 10^{22}$ \\ \hline
            \multirow{6}{*}{$U\ct$} &   2   &  $2.7\cdot 10^{2}$ & $2.6\cdot 10^{3}$ & $1.4\cdot 10^{4}$ & $6.1\cdot 10^{4}$ & $2.1\cdot 10^{5}$ & $6.2\cdot 10^{5}$ & $1.6\cdot 10^{6}$ \\
                                    &   3   &  $6.6\cdot 10^{3}$ & $2.7\cdot 10^{5}$ & $5.2\cdot 10^{6}$ & $7.0\cdot 10^{7}$ & $6.8\cdot 10^{8}$ & $5.2\cdot 10^{9}$ & $3.3\cdot 10^{10}$ \\
                                    &   4   &  $6.6\cdot 10^{4}$ & $8.0\cdot 10^{6}$ & $4.2\cdot 10^{8}$ & $1.5\cdot 10^{10}$ & $3.5\cdot 10^{11}$ & $6.4\cdot 10^{12}$ & $9.1\cdot 10^{13}$ \\
                                    &   5   &  $3.9\cdot 10^{5}$ & $1.1\cdot 10^{8}$ & $1.4\cdot 10^{10}$ & $1.1\cdot 10^{12}$ & $5.5\cdot 10^{13}$ & $2.1\cdot 10^{15}$ & $6.0\cdot 10^{16}$ \\
                                    &   6   &  $1.7\cdot 10^{6}$ & $1.0\cdot 10^{9}$ & $2.4\cdot 10^{11}$ & $3.7\cdot 10^{13}$ & $3.7\cdot 10^{15}$ & $2.7\cdot 10^{17}$ & $1.4\cdot 10^{19}$ \\
                                    &   7   &  $5.8\cdot 10^{6}$ & $6.3\cdot 10^{9}$ & $2.8\cdot 10^{12}$ & $7.7\cdot 10^{14}$ & $1.3\cdot 10^{17}$ & $1.7\cdot 10^{19}$ & $1.6\cdot 10^{21}$ \\
        \end{tabular}    
    \caption{\textbf{Naive optimization: number of variables}. Number of scalar variables in our nonlinear optimization using naive parametrized quantum combs.}
    \label{table:nlopt_vars_2}
    \end{minipage}
\end{table}

\section{Proof of Lem.~\ref{lem:tensor_and_partial_trace}: Partial trace and tensor product in the commutant algebra}
\label{sec:comb_tensor_representation}
\label{subsec:proof_lem_tensor_and_partial_trace}
\begin{proof}
    The first property of Lem.~\ref{lem:tensor_and_partial_trace} is shown as follows:
    \begin{align}
        E^\mu_{rs} \otimes \1_{d_n}
        &\cong (\1_{V_\mu} \otimes \ketbra{r}{s}) \otimes \1_{d_n}\\
        &= (\1_{V_\mu} \otimes \ketbra{r}{s}) \otimes \bigoplus_{\nu\in \Irr{G}_n}\1_{V_\nu} \otimes \1_{m_{\rho_n}^{\nu}}\\
        &= \bigoplus_{\nu\in \Irr{G}_n} (\1_{V_\mu} \otimes \1_{V_\nu}\otimes \1_{m_{\rho_n}^{\nu}}) \otimes \ketbra{r}{s}\\
        &= \bigoplus_{\lambda\in \Irr{G}^{(n)}} (\1_{V_\lambda} \otimes \1_{c_{\mu\rho_n}^{\lambda}}) \otimes \ketbra{r}{s}\\
        &= \sum_{\lambda\in \Irr{G}^{(n)}} \sum_{a\in [c_{\mu\rho_n}^{\lambda}]} (\1_{V_\lambda} \otimes \ketbra{(r\xrightarrow{a}\lambda)}{(s\xrightarrow{a}\lambda)})\\
        &\cong \sum_{\lambda\in \Irr{G}^{(n)}} \sum_{p, q\in \Paths(\lambda, \scB)} \sum_{a\in [c_{\mu\rho_n}^{\lambda}]} \delta_{p, (r\xrightarrow{a}\lambda)} \delta_{q, (s\xrightarrow{a}\lambda)} E^\lambda_{pq}.
    \end{align}
    The second property of Lem.~\ref{lem:tensor_and_partial_trace} is shown as follows.
    Suppose $p$ and $q$ are given by $p = (r' \xrightarrow{a'} \lambda)$ and $q = (s' \xrightarrow{b'} \lambda)$ for $r'\in \Paths(\mu')$ and $s'\in \Paths(\mu'')$.
    Since
    \begin{align}
        \Tr_n E^\lambda_{pq}
        &=\Tr_{n}\left[\bigotimes_{i=1}^{n} \rho_i(g) E^\lambda_{pq} \bigotimes_{i=1}^{n} \rho_i(g)^\dagger\right]\\
        &= \bigotimes_{i=1}^{n-1} \rho_i(g) \Tr_n E^\lambda_{pq} \bigotimes_{i=1}^{n-1} \rho_i(g)^\dagger
    \end{align}
    holds, we obtain
    \begin{align}
        &\mu'(g) \otimes \1_{M_\mu^{(n-1)}} \Tr_n E^\lambda_{pq} \mu''(g)^{\dagger} \otimes \1_{M_\mu^{(n-1)}}\nonumber\\
        &= \Tr_n\left[\mu'(g) \otimes \1_{M_\mu^{(n-1)}} \otimes \rho_n(g) \cdot E^\lambda_{pq} \cdot \mu''(g)^{\dagger} \otimes \1_{M_\mu^{(n-1)}} \otimes \rho_n(g)^\dagger\right]\\
        &= \Tr_n\left[\bigoplus_{\lambda'\in \Irr{G}^{(n)}} \lambda'(g) \otimes \sum_{r\in \Paths(\mu', \scB)}\sum_{a\in [c_{\mu'\rho_n}^{\lambda'}]}\ketbra{r\xrightarrow{a}\lambda'}{r\xrightarrow{a}\lambda'} \cdot E^\lambda_{pq} \cdot \bigoplus_{\lambda''\in \Irr{G}^{(n)}} \lambda''(g)^\dagger \otimes \sum_{s\in \Paths(\mu'', \scB)}\sum_{b\in [c_{\mu''\rho_n}^{\lambda''}]}\ketbra{s\xrightarrow{b}\lambda''}{s\xrightarrow{b}\lambda''} \right]\\
        &= \Tr_n E^\lambda_{pq}.
    \end{align}
    Therefore, due to Schur's lemma, we obtain the following.\\
    (i) If $\mu'\neq \mu''$ holds, we obtain
    \begin{align}
        \Tr_n E^\lambda_{pq} = 0.
    \end{align}
    Since the right-hand side of Eq.~\eqref{eq:E_partial_trace} is given by 0 for this case, we obtain Eq.~\eqref{eq:E_partial_trace}.\\
    (ii) If $\mu' = \mu''$ holds, we obtain
    \begin{align}
        \Tr_n E^\lambda_{pq} \in \mathrm{span}\{E^{\mu'}_{rs} \mid r, s\in \Paths(\mu', \scB)\},
    \end{align}
    i.e.,
    \begin{align}
        \Tr_n E^\lambda_{pq} = \sum_{r, s\in \Paths(\mu', \scB)} c_{\mu' rs}^{\lambda pq} E^{\mu'}_{rs}
    \end{align}
    holds for some coefficients $c_{\mu' rs}^{\lambda pq}\in \CC$.
    We show Eq.~\eqref{eq:E_partial_trace} for the following cases.\\
    (ii-a) If $p=q$ holds, we have $r'=s'$ and $a'=b'$.
    By substituting $r=s=r'$ and $\mu=\mu'$ for Eq.~\eqref{eq:E_tensor_I}, we obtain
    \begin{align}
        E^{\mu'}_{r'r'} \otimes \1_{d_n} &= \sum_{\lambda'\in \Irr{G}^{(n)}} \sum_{p, q\in \Paths(\lambda', \scB)} \sum_{a\in [c_{\mu'\rho_n}^{\lambda'}]} \delta_{p, (r'\xrightarrow{a}\lambda')} \delta_{q, (r'\xrightarrow{a}\lambda')} E^{\lambda'}_{pq}\\
        &= \sum_{\lambda'\in \Irr{G}^{(n)}} \sum_{a\in [c_{\mu'\rho_n}^{\lambda'}]} E^{\lambda'}_{(r'\xrightarrow{a}\lambda')(r'\xrightarrow{a}\lambda')}.
    \end{align}
    By taking the partial trace of the last subsystem, we obtain
    \begin{align}
        d_n E^{\mu'}_{r'r'} &= \sum_{\lambda'\in \Irr{G}^{(n)}} \sum_{a\in [c_{\mu'\rho_n}^{\lambda'}]} \Tr_n E^{\lambda'}_{(r'\xrightarrow{a}\lambda')(r'\xrightarrow{a}\lambda')}.
    \end{align}
    Since $\Tr_n E^{\lambda'}_{(r'\xrightarrow{a}\lambda')(r'\xrightarrow{a}\lambda')}\in \mathrm{span}\{E^{\mu'}_{rs} \mid r, s\in \Paths(\mu')\}$ and $\Tr_n E^{\lambda'}_{(r'\xrightarrow{a}\lambda')(r'\xrightarrow{a}\lambda')}\geq 0$ hold, $\Tr_n E^{\lambda'}_{(r'\xrightarrow{a}\lambda')(r'\xrightarrow{a}\lambda')}$ is written as
    \begin{align}
        \Tr_n E^{\lambda'}_{(r'\xrightarrow{a}\lambda')(r'\xrightarrow{a}\lambda')} = \1_{V_{\mu'}} \otimes P_{\lambda' r' a}
    \end{align}
    using a positive semidefinite matrix $P_{\lambda' r' a}\in \End[\CC^{M_{\mu'}^{(n-1)}}]$.
    Thus, we obtain
    \begin{align}
        d_n \ketbra{r'}{r'} = \sum_{\lambda'\in \Irr{G}^{(n)}} \sum_{a\in [c_{\mu'\rho_n}^{\lambda'}]} P_{\lambda' r' a}.
    \end{align}
    Therefore, we obtain
    \begin{align}
        P_{\lambda' r' a} \propto \ketbra{r'}{r'},
    \end{align}
    i.e.,
    \begin{align}
        \Tr_n E^{\lambda'}_{(r'\xrightarrow{a}\lambda')(r'\xrightarrow{a}\lambda')} \propto E^{\mu'}_{r'r'}.
    \end{align}
    By comparing the traces of the both sides and substituting $\lambda' = \lambda$ and $a=a'$, we obtain
    \begin{align}
        \Tr_n E^{\lambda}_{(r'\xrightarrow{a'}\lambda)(r'\xrightarrow{a'}\lambda)} = {d_\lambda \over d_{\mu'}} E^{\mu'}_{r'r'},
    \end{align}
    which corresponds to Eq.~\eqref{eq:E_partial_trace}.\\
    (ii-b) If $a'=b'$ but $r'\neq s'$ hold, for $t\in \CC$, we have
    \begin{align}
        &\Tr_n E^\lambda_{(r'\xrightarrow{a'}\lambda)(r'\xrightarrow{a'}\lambda)} + t \Tr_n E^\lambda_{(r'\xrightarrow{a'}\lambda)(s'\xrightarrow{a'}\lambda)} + \overline{t}\Tr_n E^\lambda_{(s'\xrightarrow{a'}\lambda)(r'\xrightarrow{a'}\lambda)} + \abs{t}^2 \Tr_n E^\lambda_{(s'\xrightarrow{a'}\lambda)(s'\xrightarrow{a'}\lambda)}\geq 0  \nonumber\\
        &\in \mathrm{span}\{E^{\mu'}_{rs} \mid r, s\in \Paths(\mu', \scB)\},
    \end{align}
    i.e., there exists a positive semidefinite matrix $Q_{\lambda r' s' a' t}\in \End[\CC^{M_{\mu'}^{(n-1)}}]$ such that
    \begin{align}
        \Tr_n E^\lambda_{(r'\xrightarrow{a'}\lambda)(r'\xrightarrow{a'}\lambda)} + t \Tr_n E^\lambda_{(r'\xrightarrow{a'}\lambda)(s'\xrightarrow{a'}\lambda)} + \overline{t}\Tr_n E^\lambda_{(s'\xrightarrow{a'}\lambda)(r'\xrightarrow{a'}\lambda)} + \abs{t}^2 \Tr_n E^\lambda_{(s'\xrightarrow{a'}\lambda)(s'\xrightarrow{a'}\lambda)}
        &= \1_{V_{\mu'}} \otimes Q_{\lambda r' s' a' t}.
    \end{align}
    Similarly to the case (ii-a), we have
    \begin{align}
        &d_n \1_{V_{\mu'}} \otimes (\ket{r'} + t \ket{s'})(\bra{r'}+\overline{t} \bra{s'})\nonumber\\
        &\cong d_n (E^{\mu'}_{r'r'}+ tE^{\mu'}_{r's'} + \overline{t} E^{\mu'}_{s'r'} + \abs{t}^2 E^{\mu'}_{s's'})\\
        \label{eq:ii-b}
        &= \sum_{\lambda\in \Irr{G}^{(n)}} \sum_{a\in [c_{\mu'\rho_n}^{\lambda'}]} \Tr_n E^\lambda_{(r'\xrightarrow{a}\lambda)(r'\xrightarrow{a}\lambda)} + t \Tr_n E^\lambda_{(r'\xrightarrow{a}\lambda)(s'\xrightarrow{a}\lambda)} + \overline{t}\Tr_n E^\lambda_{(s'\xrightarrow{a}\lambda)(r'\xrightarrow{a}\lambda)} + \abs{t}^2 \Tr_n E^\lambda_{(s'\xrightarrow{a}\lambda)(s'\xrightarrow{a}\lambda)}\\
        &\cong \sum_{\lambda\in \Irr{G}^{(n)}} \sum_{a\in [c_{\mu'\rho_n}^{\lambda'}]} \1_{V_{\mu'}} \otimes Q_{\lambda r' s' a t}.
    \end{align}
    Therefore, we obtain
    \begin{align}
        Q_{\lambda r' s' a' t} \propto (\ket{r'} + t \ket{s'})(\bra{r'}+\overline{t} \bra{s'}),
    \end{align}
    i.e.,
    \begin{align}
        &\Tr_n E^\lambda_{(r'\xrightarrow{a'}\lambda)(r'\xrightarrow{a'}\lambda)} + t \Tr_n E^\lambda_{(r'\xrightarrow{a'}\lambda)(s'\xrightarrow{a'}\lambda)} + \overline{t}\Tr_n E^\lambda_{(s'\xrightarrow{a'}\lambda)(r'\xrightarrow{a'}\lambda)} + \abs{t}^2 \Tr_n E^\lambda_{(s'\xrightarrow{a'}\lambda)(s'\xrightarrow{a'}\lambda)}\\
        &\propto E^{\mu'}_{r'r'} + tE^{\mu'}_{r's'} + \overline{t} E^{\mu'}_{s'r'} + \abs{t}^2 E^{\mu'}_{s's'}.
    \end{align}
    Since this holds for arbitrary $t\in \CC$ and $\Tr_n E^\lambda_{(r'\xrightarrow{a'}\lambda)(r'\xrightarrow{a'}\lambda)} = {d_\lambda \over d_{\mu'}} E^{\mu'}_{r'r'}$ holds from the case (ii-a), we obtain
    \begin{align}
        \Tr_n E^\lambda_{(r'\xrightarrow{a'}\lambda)(s'\xrightarrow{a'}\lambda)} = {d_\lambda \over d_{\mu'}} E^{\mu'}_{r's'},
    \end{align}
    which corresponds to Eq.~\eqref{eq:E_partial_trace}.\\
    (ii-c) If $a'\neq b'$ holds, from Eq.~\eqref{eq:ii-b}, we obtain
    \begin{align}
        &d_n \1_{V_{\mu'}} \otimes (\ket{r'} + t \ket{s'})(\bra{r'}+\overline{t} \bra{s'})\nonumber\\
        &\cong \sum_{\lambda\in \Irr{G}^{(n)}} \sum_{a\in [c_{\mu'\rho_n}^{\lambda'}]} \Tr_n E^\lambda_{(r'\xrightarrow{a}\lambda)(r'\xrightarrow{a}\lambda)} + t \Tr_n E^\lambda_{(r'\xrightarrow{a}\lambda)(s'\xrightarrow{a}\lambda)} + \overline{t}\Tr_n E^\lambda_{(s'\xrightarrow{a}\lambda)(r'\xrightarrow{a}\lambda)} + \abs{t}^2 \Tr_n E^\lambda_{(s'\xrightarrow{a}\lambda)(s'\xrightarrow{a}\lambda)}\\
        &= \sum_{\lambda\in \Irr{G}^{(n)}} \sum_{a, b\in [c_{\mu'\rho_n}^{\lambda'}]} \sum_{k\in [c_{\mu'\rho_n}^{\lambda'}]} \omega_{c_{\mu'\rho_n}^{\lambda'}}^{(a-b)k} \nonumber\\
        &\hspace{30pt}\times \left[\Tr_n E^\lambda_{(r'\xrightarrow{a}\lambda)(r'\xrightarrow{b}\lambda)} + t \Tr_n E^\lambda_{(r'\xrightarrow{a}\lambda)(s'\xrightarrow{b}\lambda)} + \overline{t}\Tr_n E^\lambda_{(s'\xrightarrow{a}\lambda)(r'\xrightarrow{b}\lambda)} + \abs{t}^2 \Tr_n E^\lambda_{(s'\xrightarrow{a}\lambda)(s'\xrightarrow{b}\lambda)}\right],
    \end{align}
    where $\omega_r$ is the $r$-th root of unity given by $\omega_r\defeq e^{2\pi i/r}$.
    Since
    \begin{align}
        \sum_{\lambda\in \Irr{G}^{(n)}} \sum_{a, b\in [c_{\mu'\rho_n}^{\lambda'}]} \omega_{c_{\mu'\rho_n}^{\lambda'}}^{(a-b)k} \left[\Tr_n E^\lambda_{(r'\xrightarrow{a}\lambda)(r'\xrightarrow{b}\lambda)} + t \Tr_n E^\lambda_{(r'\xrightarrow{a}\lambda)(s'\xrightarrow{b}\lambda)} + \overline{t}\Tr_n E^\lambda_{(s'\xrightarrow{a}\lambda)(r'\xrightarrow{b}\lambda)} + \abs{t}^2 \Tr_n E^\lambda_{(s'\xrightarrow{a}\lambda)(s'\xrightarrow{b}\lambda)}\right]\geq 0
    \end{align}
    holds, we obtain
    \begin{align}
        &\sum_{\lambda\in \Irr{G}^{(n)}} \sum_{a, b\in [c_{\mu'\rho_n}^{\lambda'}]} \omega_{c_{\mu'\rho_n}^{\lambda'}}^{(a-b)k} \left[\Tr_n E^\lambda_{(r'\xrightarrow{a}\lambda)(r'\xrightarrow{b}\lambda)} + t \Tr_n E^\lambda_{(r'\xrightarrow{a}\lambda)(s'\xrightarrow{b}\lambda)} + \overline{t}\Tr_n E^\lambda_{(s'\xrightarrow{a}\lambda)(r'\xrightarrow{b}\lambda)} + \abs{t}^2 \Tr_n E^\lambda_{(s'\xrightarrow{a}\lambda)(s'\xrightarrow{b}\lambda)}\right]\nonumber\\
        &\propto \1_{V_{\mu'}} \otimes (\ket{r'} + t \ket{s'})(\bra{r'}+\overline{t} \bra{s'}).
    \end{align}
    Comparing the traces of the both sides, we obtain
    \begin{align}
        &\sum_{\lambda\in \Irr{G}^{(n)}} \sum_{a, b\in [c_{\mu'\rho_n}^{\lambda'}]} \omega_{c_{\mu'\rho_n}^{\lambda'}}^{(a-b)k} \left[\Tr_n E^\lambda_{(r'\xrightarrow{a}\lambda)(r'\xrightarrow{b}\lambda)} + t \Tr_n E^\lambda_{(r'\xrightarrow{a}\lambda)(s'\xrightarrow{b}\lambda)} + \overline{t}\Tr_n E^\lambda_{(s'\xrightarrow{a}\lambda)(r'\xrightarrow{b}\lambda)} + \abs{t}^2 \Tr_n E^\lambda_{(s'\xrightarrow{a}\lambda)(s'\xrightarrow{b}\lambda)}\right]\nonumber\\
        &\cong {d_\lambda \over d_{\mu'}} \1_{V_{\mu'}} \otimes (\ket{r'} + t \ket{s'})(\bra{r'}+\overline{t} \bra{s'}).
    \end{align}
    Since this holds for arbitrary $k\in [c_{\mu'\rho_n}^{\lambda'}]$, we obtain
    \begin{align}
        \Tr_n E^\lambda_{(r'\xrightarrow{a}\lambda)(r'\xrightarrow{b}\lambda)} + t \Tr_n E^\lambda_{(r'\xrightarrow{a}\lambda)(s'\xrightarrow{b}\lambda)} + \overline{t}\Tr_n E^\lambda_{(s'\xrightarrow{a}\lambda)(r'\xrightarrow{b}\lambda)} + \abs{t}^2 \Tr_n E^\lambda_{(s'\xrightarrow{a}\lambda)(s'\xrightarrow{b}\lambda)} = 0
    \end{align}
    for $a\neq b$.
    Since this holds for arbitrary $t\in \CC$, we obtain
    \begin{align}
        \Tr_n E^\lambda_{(r'\xrightarrow{a}\lambda)(s'\xrightarrow{b}\lambda)} = 0,
    \end{align}
    which corresponds to Eq.~\eqref{eq:E_partial_trace}.
\end{proof}

\end{document}